\documentclass[10pt]{article}
\usepackage[a4paper, total={6.5in, 8in}]{geometry}

\usepackage[T1]{fontenc}
\usepackage{lmodern}
\usepackage{amsmath,amsthm,amscd,amssymb,latexsym,esint,upref,stmaryrd,enumerate,xcolor,verbatim,yfonts,multicol}
\usepackage{fancyhdr}
\usepackage{graphicx}

\usepackage{pifont}
\usepackage{dcolumn}
\usepackage{bm}
\usepackage{caption,subcaption}

\usepackage{amsthm}
\usepackage{latexsym}
\usepackage{graphics}
\usepackage{amssymb}
\usepackage{amsfonts}
\usepackage{mathrsfs}
\usepackage{bm}
\usepackage{amsmath}
\usepackage{color}
\usepackage[dvipsnames]{xcolor}
\usepackage{arydshln} 
\usepackage[title]{appendix}

\usepackage{authblk}

\usepackage{cancel}

\RequirePackage{graphicx}
\RequirePackage{flushend}
\RequirePackage[colorlinks,citecolor=blue,urlcolor=blue,linkcolor=blue]{hyperref}

\usepackage{mathtools}  

\newtheorem{theorem}{Theorem}[section]
\newtheorem{assumption}[theorem]{Assumption}

\newtheorem{definition}[theorem]{Definition}
\newtheorem{hypothesis}[theorem]{Hypothesis}
\theoremstyle{remark}
\newtheorem{remark}[theorem]{Remark}
\newcommand{\oh}{o}

\newcommand{\R}{{\mathbb R}}
\newcommand{\N}{{\mathbb N}}

\newcommand{\C}{{\mathbb C}}

\newcommand{\dott}{\,\cdot\,}
\newcommand{\hatt}{\widehat} 
\newcommand{\wti}{\widetilde}

\renewcommand{\a}{\alpha}
\renewcommand{\b}{\beta}

\newcommand{\z}{\zeta}

\DeclareMathOperator{\rank}{rank}

\DeclareMathOperator{\Ai}{Ai}

\begin{document}

\title{The spectral $\zeta$-function of Sturm--Liouville operators with $N$ generalized point potentials}

\bigskip

\author[1]{Guglielmo Fucci\footnote{e-mail: fuccig@ecu.edu}}
\affil[1]{\small{Department of Mathematics, East Carolina University}\\ \small{Greenville, North Carolina 27858, USA.}}

\author[2]{Jonathan Stanfill}
\affil[2]{Division of Geodetic
Science, School of Earth Science, The Ohio State University, Columbus, OH 43210, USA}

\bigskip
\bigskip

\maketitle

\begin{abstract}

This work analyzes the spectral zeta function associated with regular and singular Sturm--Liouville operators endowed with $N$ generalized point potentials. The point interactions are characterized as a particular class of self-adjoint extensions of Sturm--Liouville operators defined on a chain of $N+1$ adjacent intervals. Each interaction is described by $SL(2,\R)$ matching conditions relating generalized boundary values across neighboring intervals. 
We construct the spectral zeta function associated with Sturm--Liouville operators endowed with $N$ generalized point potentials in terms of a contour integral involving an appropriate characteristic function. The process of analytic continuation of the spectral zeta function to a neighborhood of the origin is illustrated by means of specific examples where the $\zeta$-regularized functional determinant is also explicitly computed.

\end{abstract}
\section{Introduction}\label{sec:I}

Exactly solvable systems play an important role in mathematical  physics since they provide a model of the salient characteristics 
of a physical system and, at the same time, have the advantage of being described by equations that possess closed form solutions. The classic point potentials represent a particular example of exactly solvable models \cite{albeverio} and they are mainly utilized to describe idealized boundaries, localized interactions, and singular perturbations in quantum mechanics and quantum field theory \cite{belloni14}. 

The most widely studied point interactions for Schr\"odinger operators are the $\delta$ potential, the $\delta'$ potential, and a combination of the two, namely, the $\delta$-$\delta'$ potential. 
The $\delta$ potential was the first to be extensively considered especially for its role in the modeling of crystals \cite{kronig-prsa31}. Its applications extend to many other areas such as nuclear physics, molecular and solid state physics, just to name a few. The $\delta'$ potential was originally considered in \cite{SE} and later on in the ambit of non-relativistic quantum mechanics in \cite{albeverio,antoine87} (see also \cite{ADK98,G93} for different characterizations of $\delta'$). Applications of this potential to higher dimensional quantum field theory have been presented, e.g., in \cite{cavero2021casimir,nieto2017towards}. 
The $\delta$-$\delta'$ potential recently became an important topic of research and was introduced mainly as a generalization of the $\delta$ and the $\delta'$ potentials. Its main formal properties were investigated in \cite{golovaty} while its applications to a variety of physical systems can be found, for instance, in \cite{cavero2021casimir,munoz2015delta,munoz2019hyperspherical}. The amount of literature devoted to these point potentials in one and higher dimensions is quite large and the interested reader can find an in-depth treatment of the subject in the monograph \cite{albeverio}.   

A $\delta$ potential can be defined as a limit of scaled local potentials (see e.g. \cite{albeverio81,albeverio84} and references therein) as it is often 
introduced in quantum mechanics textbooks. However, this approach is less than desirable because it does not offer a straightforward mathematical framework for considering more generalized point potentials. A more suitable alternative that allows for the definition and analysis of generalized point potentials consists in utilizing the theory of self-adjoint extensions of Sturm--Liouville operators and Schr\"{o}dinger operators, see, in particular \cite{albeverio,ADK98,albeverio2000singular,G93,SE,seba86} and the upcoming \cite{FRS26}.

In this work we follow the latter viewpoint and define point potentials as realizations of self-adjoint extensions of Sturm--Liouville operators, providing a straightforward generalization from the Schr\"odinger setting. More precisely, let $R$ be an interior point where the potential is located. We regard $R$ as a common boundary (point) of {\it two adjacent} intervals. These intervals could be both semi-infinite (the case that is usually analyzed in the literature \cite{albeverio}) or could be both finite, or one of each. In this framework, the potential at $R$ is defined in terms of the boundary conditions that characterize a specific class of self-adjoint extensions of the Sturm--Liouville operator on the two adjoining intervals. Although the family of self-adjoint extensions on two intervals is quite large, in the next section we identify precisely those extensions that describe point interactions. This local characterization serves as the building block for the general framework developed in this paper, where a chain of $N+1$ adjacent intervals is coupled by $N$ generalized point potentials.

The generalized point potentials constructed in this fashion expand considerably the class of exactly solvable models in quantum mechanics and quantum field theory. 
We would like to point out that the boundary of a region on which the fields satisfy self-adjoint boundary conditions (of which Dirichlet, Neumann and Robin are particular cases analyzed often in the literature) could also be viewed, following the description provided above, as a point potential.
From this perspective, a point potential becomes a {\it semi-transparent} boundary which can be used, for instance, to model the interaction of fields with real materials.  

Most investigations of point interactions focus on a single interaction on the real line or on a finite interval. Although the corresponding self-adjoint extension theory is well understood, considerably less is known about systems consisting of arbitrary Sturm--Liouville operators with several generalized point interactions. Such systems naturally arise when modeling multiple interfaces, layered media, quantum graphs, and semitransparent boundaries.

The main goal of this work is to construct the spectral zeta function associated with Sturm--Liouville operators endowed with generalized point potentials. To this end, we first describe a theory of $N$ generalized point potentials based on self-adjoint extensions of Sturm--Liouville operators defined on a chain of $N+1$ adjacent intervals. Each interaction is described by an arbitrary $SL(2,\mathbb{R})$ matching matrix acting on generalized boundary values. This framework applies to both regular and singular Sturm--Liouville problems, accommodates any combination of limit point and limit circle endpoints, allows for different operators in each interval, and contains the classical $\delta$, $\delta'$, and $\delta$--$\delta'$ interactions as special cases. It therefore provides a unified operator-theoretic description of a broad class of exactly solvable models with multiple point interactions.

After establishing this operator-theoretic framework, we  construct the spectral zeta function associated with Sturm--Liouville operators endowed with $N$ generalized point potentials. To this end, we extend the methods developed in  \cite{FGKS21,FPS25,FPS26,gesztesy19} for the construction of the spectral zeta function of Sturm--Liouville operators on a single interval to the setting of a chain of $N+1$ adjacent intervals coupled through generalized point interactions. The key ingredient is the construction of suitable characteristic functions for each class of self-adjoint extensions describing the point potentials. These characteristic functions naturally incorporate the transfer matrices associated with the individual interactions in a way that encodes the cumulative effect of all point potentials. The characteristic function is then used as the starting point of the contour integral representations of the corresponding spectral zeta functions. By analyzing the large spectral parameter asymptotics of the characteristic functions, we obtain the analytic continuation of the spectral zeta function to a neighborhood of the origin. As an application, we compute the functional determinant of self-adjoint realizations $T$, or equivalently $\zeta'(0;T)$, for the interface and the two generalized point potential system.

To the best of our knowledge, this analysis is new for a few reasons. First, the existing literature on point interactions is largely devoted to Schr\"odinger operators on the real line or on a single finite interval, whereas our framework applies equally well to Sturm--Liouville problems on bounded and unbounded intervals with any combination of limit point and limit circle endpoints as well as for different operators on each interval. Second, previous investigations of spectral zeta functions for point interactions have primarily been motivated by applications to quantum field theory, such as the Casimir effect, and have been restricted to a few explicitly solvable interactions, most notably the $\delta$ and $\delta$-$\delta'$ potentials for Schr\"odinger operators (see \cite{cavero2021casimir,munoz2015delta,munoz2019hyperspherical,romaniega2022casimir} and the references therein). The present work, instead, develops the spectral zeta function for the considerably broader class of point potentials arising from general self-adjoint extensions of Sturm--Liouville operators.

The outline of the paper is as follows. In Section \ref{Sec2} we develop the self-adjoint extension framework and characterize the class of generalized point potentials generated by interior
boundary conditions. Section \ref{Sec3} is devoted to the construction of the characteristic functions and the contour integral representation of the spectral zeta function.
Finally, we analyze the analytic continuation procedure and illustrate the general formalism through explicit examples.

\section{Self-adjoint extensions and the point potential}\label{Sec2}

To set the stage, for $j\in {\cal S}=\{1,\ldots,N+1\}$, we consider the intervals $I_j=(a_j,b_j)$ with $-\infty\leq a_j<b_j\leq +\infty$, and the following Sturm--Liouville (SL) differential expression
\begin{equation}\label{2}
\tau_j=\frac{1}{r_j(x)}\left[-\frac{d}{dx}p_j(x)\frac{d}{dx}+q_j(x)\right],
\end{equation}
where the coefficient functions satisfy the following:
\begin{hypothesis}\label{Hyp1}
The Lebesgue measurable functions $p_j,q_j,r_j$ on $I_j$ are such that
\\[1mm] 
$(i)$ \hspace*{1.1mm} $r_j>0$ a.e.~on $I_j$, $r_j\in L_{\textrm{loc}}(I_j,dx)$; \\[1mm] 
$(ii)$ \hspace*{.1mm} $p_j>0$ a.e.~on $I_j$, $1/p_j \in L_{\textrm{loc}}(I_j,dx)$; \\[1mm] 
$(iii)$ $q_j$ is real-valued a.e.~on $I_j$, $q_j\in L_{\textrm{loc}}(I_j,dx)$. 
\end{hypothesis}

\begin{remark}
$(i)$ In physical applications the differential expression in \eqref{2} is the same in each interval.\\[1mm]
$(ii)$ In general, the intervals $I_j$ can be either joint at the edges or disjoint.  
\end{remark}

To keep our analysis general, we assume that the endpoints of $I_j$ can be either limit circle or limit point: If for every $z\in\C$ all solutions of $\tau_j u=\lambda u$ are in $L^{2}(I_j,r_jdx)$ near $a_j$ we say that the endpoint $a_j$ is limit circle. If, on the other hand, for every $z\in\C/\R$ there exists only one solution of $\tau_j u=\lambda u$ that is in $L^{2}(I_j,r_jdx)$ near $a_j$, then we say the endpoint $a_j$ is limit point. Similar definitions hold for the endpoint $b_j$. According to Weyl's alternative, these are the only two cases that can occur. It is important to point out that regular endpoints are a particular case of limit circle endpoints. Moreover, the limit point or circle classification of an endpoint depends entirely on the coefficient functions of the differential expression \eqref{2}.

From the differential expression \eqref{2} we can define the maximal operator 
\begin{align}
   & T_{\max}^{(j)}f=\tau_j f\nonumber\\
    &f_j\in D_{\textrm{max}}^{(j)}=\{f\in L^{2}(I_j,r_jdx)|f,p_jf'\in AC_{\textrm{loc}}(I_j)\;,\tau_j f\in L^{2}(I_j,r_jdx)\},
\end{align}
and the minimal operator
\begin{align}
 &T_{\min}^{(j)}f=\tau_jf \nonumber\\
 &f\in D_{\textrm{min}}^{(j)}=\{f\in D_{\textrm{max}}^{(j)}|\,W_j(h,f)(a_j)=0=W_j(h,f)(b_j),\;\textrm{for all}\;h\in D_{\textrm{max}}^{(j)}\},
\end{align}
where the Wronskian $W_j(f,g)(x)$, of $f,g\in AC_{\textrm{loc}}(I_j)$ is defined as
\begin{align}\label{6a}
   W_j(f,g)(x)=f(x)p_{j}(x)g'(x)-g(x)p_{j}(x)f'(x),\quad x\in I_{j}. 
\end{align}
Most systems in the ambit of quantum mechanics and quantum field theory are described by differential operators whose spectrum is bounded from below. For this reason we will henceforth assume that
\begin{hypothesis}\label{Hyp2}
    For $j\in{\cal S}$, the Sturm--Liouville differential expression \eqref{2} satisfies Hypothesis \ref{Hyp1} and its minimal operator $T_{\min}^{(j)}$ is bounded from below, that is
    \begin{align}\label{boundedbel}
        (u,[T_{\min}^{(j)}-\lambda_{0}^{(j)} I]u)_{L^{2}(I_j,r_jdx)}\geq 0,\quad u\in D_{\textrm{min}}^{(j)}.
    \end{align}
\end{hypothesis}
If $T_{\min}^{(j)}$ is bounded from below then, for fixed $c_j, d_j \in (a_j,b_j)$, $c_j \leq d_j$, there exists a $\nu_0^{(j)}\in\R$ such that for all $\lambda_{j}<\nu_0^{(j)}$, $\tau_j u = \lambda_j u$ has $($real-valued\,$)$ nonvanishing solutions
$u_{a_j}(\lambda,\dott) \neq 0$,
$\hatt u_{a_j}(\lambda,\dott) \neq 0$ in $(a_j,c_j]$, and $($real-valued\,$)$ nonvanishing solutions
$u_{b_j}(\lambda,\dott) \neq 0$, $\hatt u_{b_j}(\lambda,\dott) \neq 0$ in $[d_j,b_j)$, such that 
\begin{align}
&W_j(\hatt u_{a_j} (\lambda,\dott),u_{a_j}(\lambda,\dott)) = 1,
\quad u_{a_j}(\lambda,x)=\oh(\hatt u_{a_j}(\lambda,x))
\text{ as $x\downarrow a_j$,} \label{A3} \\
&W_j(\hatt u_{b_j} (\lambda,\dott),u_{b_j}(\lambda,\dott))\, = 1,
\quad u_{b_j}(\lambda,x)\,=\oh(\hatt u_{b_j}(\lambda,x))
\text{ as $x\uparrow b_j$,} \\
&\int_{a_j}^{c_j} dx \, p_j(x)^{-1}u_{a_j}(\lambda,x)^{-2}=\int_{d_j}^{b_j} dx \, 
p_j(x)^{-1}u_{b_j}(\lambda,x)^{-2}=\infty,\label{A4} \\
&\int_{a_j}^{c_j} dx \, p_j(x)^{-1}{\hatt u_{a_j}(\lambda,x)}^{-2}<\infty, \quad 
\int_{d_j}^{b_j} dx \, p_j(x)^{-1}{\hatt u_{b_j}(\lambda,x)}^{-2}<\infty, \label{A5}
\end{align}
a proof of which can be found, for instance, in \cite{LM36} (see also \cite{Re43,Re51} and \cite[Appendix]{HW55}).
\begin{definition}
Suppose that $T_{\min}^{(j)}$ is bounded from below, and let $\lambda\in\R$. Then $u_{a_j}(\lambda,\dott)$ $($resp., $u_{b_j}(\lambda,\dott)$$)$ described above is called a {\it principal} $($or {\it minimal}\,$)$
solution of $\tau_j u=\lambda u$ at $a_j$ $($resp., $b_j$$)$. A real-valued solution 
$v_{a_j}(\lambda,\dott)$ $($resp., $v_{b_j}(\lambda,\dott)$$)$ of $\tau_j
u=\lambda u$ linearly independent of $u_{a_j}(\lambda,\dott)$ $($resp.,
$u_{b_j}(\lambda,\dott)$$)$ is called {\it nonprincipal} at $a_j$ $($resp., $b_j$$)$.
\end{definition}

Principal and nonprincipal solutions can be utilized to define the {\it generalized boundary values} at a limit circle endpoint according to the following:
\begin{theorem}[{\cite[Thm. 4.5]{GLN20}}] \label{At3}
Assume that $c_j$ is a limit circle point for $\tau_j$ and that $T_{\min}^{(j)} \geq \lambda_0 I$ for some $\lambda_0 \in \R$. Denote by 
$u_{c_j}(\lambda_0, \dott)$ and $\hatt u_{c_j}(\lambda_0, \dott)$  principal and nonprincipal solutions of $\tau_j u = \lambda_0 u$ at $c_j$, normalized so that
\begin{equation}
W_j(\hatt u_{c_j}(\lambda_0,\dott), u_{c_j}(\lambda_0,\dott)) = 1.
\end{equation}  
Then, for all $g \in D_{\max}^{(j)}$, one obtains 
\begin{align}
 \wti g(c_j) & =  - W_j(u_{c_j}(\lambda_0,\dott), g)(c_j)= \lim_{x \to c_j} \frac{g(x)}{\hatt u_{c_j}(\lambda_0,x)},\label{A6}  \\    
{\wti g}^{\, \prime}(c_j) &= W_j(\hatt u_{c_j}(\lambda_0,\dott), g)(c_j) = \lim_{x \to c_j} \frac{g(x) - \wti g(c_j) \hatt u_{c_j}(\lambda_0,x)}{u_{c_j}(\lambda_0,x)}.\label{A7} 
\end{align}
In particular, the limits on the right-hand sides of \eqref{A6} and \eqref{A7} exist. 
\end{theorem}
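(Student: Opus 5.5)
The plan is to reduce both limits to statements about the Wronskian. Write $u=u_{c_j}(\lambda_0,\dott)$ and $\hatt u=\hatt u_{c_j}(\lambda_0,\dott)$, which are nonvanishing near $c_j$. For $g\in D_{\max}^{(j)}$ define the Wronskians $A(x)=W_j(\hatt u,g)(x)$ and $B(x)=W_j(u,g)(x)$. Since $c_j$ is limit circle, $u,\hatt u\in D_{\max}^{(j)}$ near $c_j$ (after a cutoff). The Lagrange identity
\[
W_j(f,g)(x_2)-W_j(f,g)(x_1)=\int_{x_1}^{x_2}\big[\overline{(\tau_j f)}g-\overline{f}\,\tau_j g\big]r_j\,dx,
\]
applied with $f,g$ real-valued, shows that $A(x)$ and $B(x)$ have finite limits as $x\to c_j$. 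This gives existence of the middle expressions in \eqref{A6} and \eqref{A7}; the general complex $g$ follows by linearity.

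Next I would express $g$ pointwise through these Wronskians. Since $W_j(\hatt u,u)=1$, for any $x$ near $c_j$ one has the algebraic identity
\[
g(x)=-B(x)\,\hatt u(x)+A(x)\,u(x).
\]
This follows by checking it on the basis $\{u,\hatt u\}$ at the level of values and quasi-derivatives, or by directly expanding $A u - B\hatt u = g\,(\hatt u p u' - u p\hatt u')$. Dividing by $\hatt u$ gives
\[
\frac{g(x)}{\hatt u(x)}=-B(x)+A(x)\frac{u(x)}{\hatt u(x)}.
\]
Because $u=\oh(\hatt u)$ by \eqref{A3} and $A(x)$ is bounded near $c_j$, the last term tends to $0$. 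Hence $\lim g/\hatt u=-W_j(u,g)(c_j)=\wti g(c_j)$, which is \eqref{A6}.

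For \eqref{A7}, I use the same identity:
\[
\frac{g(x)-\wti g(c_j)\hatt u(x)}{u(x)}=A(x)-\big(B(x)-B(c_j)\big)\frac{\hatt u(x)}{u(x)}.
\]
Here $\hatt u/u$ blows up, so the delicate step is showing that $(B(x)-B(c_j))\,\hatt u(x)/u(x)\to0$.

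This is the main obstacle, and I would use the Lagrange identity quantitatively. Write $B(x)-B(c_j)=\int_{c_j}^{x}(\lambda_0 u\,g-u\,\tau_j g)\,r_j\,dt$. I would not estimate this crudely via Cauchy--Schwarz. Instead I would represent $g$ by variation of parameters,
\[
g=\alpha\hatt u+\beta u+\int K(x,t)(\tau_j-\lambda_0)g(t)\,r_j\,dt,
\]
with kernel $K=u(x)\hatt u(t)-\hatt u(x)u(t)$, so that the remainder term can be analyzed directly. The key inputs are the monotonicity of $u/\hatt u$ near $c_j$, which follows from $(u/\hatt u)'=1/(p_j\hatt u^2)>0$ together with \eqref{A4}--\eqref{A5}, and the square integrability of $u,\hatt u$ from the limit circle property. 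With these, I expect to show that the remainder is $\oh(u)$ after subtracting the $\hatt u$-coefficient. An alternative route is to cite the argument in \cite{GLN20}, where the normalized boundary values are obtained by exactly this decomposition. Either way the conclusion is that the quotient tends to $A(c_j)=W_j(\hatt u,g)(c_j)=\wti g^{\,\prime}(c_j)$.
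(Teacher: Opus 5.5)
The paper contains no proof of this statement: it is quoted from \cite[Thm.~4.5]{GLN20}, so the proposal can only be checked on its own merits. Your first half is complete and correct. The Lagrange identity gives finite limits of $A(x)=W_j(\hatt u,g)(x)$ and $B(x)=W_j(u,g)(x)$. Then the identity $g=A\,u-B\,\hatt u$ together with $u=\oh(\hatt u)$ yields \eqref{A6}. Your reduction of \eqref{A7} to proving $(B(x)-B(c_j))\,\hatt u(x)/u(x)\to0$ is also correct.

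The gap is that this last step, which you yourself identify as the main obstacle, is never carried out. You list ingredients and say you "expect to show" the conclusion. The detour through variation of parameters does not help. Its remainder, divided by $u(x)$, equals
\[
-\int_{c_j}^{x}\hatt u\,h\,r_j\,dt+\frac{\hatt u(x)}{u(x)}\int_{c_j}^{x}u\,h\,r_j\,dt,
\]
so exactly the same term reappears. (Incidentally, the correct kernel is $\hatt u(x)u(t)-u(x)\hatt u(t)$, the negative of your $K$.)

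The missing argument is short. From the normalization one gets $(u/\hatt u)'=1/(p_j\hatt u^{2})>0$ a.e., and $u/\hatt u\to0$ at $c_j$ by \eqref{A3}. Hence $|u/\hatt u|$ increases as one moves away from $c_j$, at either a left or a right endpoint. So for every $t$ between $c_j$ and $x$ one has $|\hatt u(x)/u(x)|\,|u(t)|\le|\hatt u(t)|$. With $h=(\tau_j-\lambda_0)g$ this gives
\[
\Big|\frac{\hatt u(x)}{u(x)}\big(B(x)-B(c_j)\big)\Big|=\Big|\frac{\hatt u(x)}{u(x)}\int_{c_j}^{x}u\,h\,r_j\,dt\Big|\le\Big|\int_{c_j}^{x}|\hatt u|\,|h|\,r_j\,dt\Big|\longrightarrow0 .
\]
The limit holds because $\hatt u\in L^2$ near $c_j$ (limit circle) and $h\in L^2(I_j,r_j\,dx)$, so $\hatt u\,h\in L^1(r_j\,dx)$ near $c_j$.

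Only the Wronskian normalization and \eqref{A3} are used here; \eqref{A4}--\eqref{A5} play no role. Once this estimate is inserted, your proof is complete. Your fallback of citing \cite{GLN20} would be circular, since that is the result being proved.
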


\begin{remark}
    It is worthwhile to mention that when an endpoint $c_j$ is regular, the principal and nonprincipal solutions can be chosen such that the generalized boundary conditions \eqref{A6} and \eqref{A7} reduce to the ordinary boundary values $g(c_j)$ and $(p_jg')(c_j)$, respectively \cite[Rem. 5.5.9(ii)]{GNZ23}.
\end{remark}

The construction of the self-adjoint extensions of a Sturm--Liouville operator generated by differential expressions of the form \eqref{2} over a countable number of intervals has been developed in \cite{Ev86} and \cite{Ev92}. For the sake of completeness, we outline here the most important aspects of that analysis. 

The maximal and minimal domains for the set of differential expressions $\{\tau_j\}_{j\in{\cal S}}$ in (\ref{2}) on the intervals $\{I_j\}_{j\in{\cal S}}$ are subsets of the Hilbert space $\mathcal{H}=\bigoplus_{j\in\cal S} L^{2}(I_j,r_jdx)$ and are written as the direct sum of the corresponding one interval domains \cite{Ev92}
\begin{equation}\label{4}
D_{\textrm{max}}=\bigoplus_{j=1}^{N+1}D_{\textrm{max}}^{(j)}\quad\textrm{and}\quad D_{\textrm{min}}=\bigoplus_{j=1}^{N+1}D_{\textrm{min}}^{(j)}.
\end{equation}
The elements of $D_{\textrm{max}}$ are the functions ${\bf f}=\{f_1,\ldots,f_{N+1}\}$ with $f_j\in D_{\textrm{max}}^{j}$. Obviously, a similar definition
holds for the elements of the minimal domain.  
The inner product of ${\bf f},{\bf g}\in \mathcal{H}$ is defined as $({\bf f},{\bf g})_\mathcal{H}=\sum_{j\in\cal S}(f_j,g_j)_{L^{2}(I_j,r_jdx)}$ while the Wronskian in \eqref{6a} can be extended to functions in $D_{\max}$ as follows
\begin{align}
    W({\bf f},{\bf g})=\sum_{j=1}^{N+1}W_j(f_j,g_j).
\end{align}
From here we can define the Lagrange sesquilinear form, which will be useful later in the analysis of the self-adjoint extensions:
\begin{align}
    [{\bf f},{\bf g}]=\sum_{j=1}^{N+1}\left[W_j(f_j,\bar{g}_j)(b_j)-W_j(f_j,\bar{g}_j)(a_j)\right].
\end{align}

The minimal, $T_{\min}$, and maximal, $T_{\max}$, operators associated with the set $\{\tau_j\}_{j\in{\cal S}}$ on the intervals $\{I_j\}_{j\in{\cal S}}$  (\ref{2}) are, then
\begin{equation}\label{5}
T_{\min/\max}=\bigoplus_{j=1}^{N+1}T_{\min/\max}^{(j)}\;,\quad\textrm{with}\quad
T_{\min/\max}^{(j)}f=\tau_j f\;,\,\,f\in D_{\textrm{min}/\textrm{max}}^{(j)}.
\end{equation}
We denote by $\tau$ the associated differential expression defined to be equal to $\tau_j$ when restricted to each interval $I_j$.
\begin{remark}\label{rem2.6}
    From \eqref{boundedbel}, it immediately follows that $T_{\min}$ is also bounded from below with
    \begin{align}
        (u,[T_{\min}-\lambda_0 I]u)_{\mathcal{H}}\geq 0,\quad u\in D_{\textrm{min}},\quad \text{where }\ \lambda_{0}=\textrm{min}\{\lambda_{0}^{(1)},\ldots,\lambda_{0}^{(N)}\}.
    \end{align}
\end{remark}
One can prove \cite[Thm. 2.1]{Ev92} that the operator $T_{\min}$ is closed and symmetric and that the deficiency indices $d^{+}$ and $d^{-}$ of $T_{\min}$ are $d^{+/-}=\sum_{j\in\cal S}d_{j}^{+/-}$ where $d_{j}^{+/-}$ are the deficiency indices of $T_{\min}^{(j)}$ \cite[Cor. 2.4]{Ev92}. Since the differential expression \eqref{2} has real coefficients, the deficiency indices are equal $d_{j}^{+}=d_{j}^{-}$ for $j\in\cal S$ \cite[Ch. XIII, Cor. 14]{Dunford63} and   
\begin{align}
   0\leq d^{+}=d^{-}=d\leq 2(N+1), 
\end{align} so that the self-adjoint extensions of $T_{\min}$ exist. In addition, $d$ simply counts the number of endpoints which are limit circle. In order to describe the self-adjoint extensions of $T_{\min}$ we need to introduce the notion of a {\it generalized boundary condition set} for the pair $\{T_{\max},T_{\min}\}$
\begin{definition}\label{def2.2}
    The elements $\{{\bf h}_i\}\in{\cal H}$, $i\in\{1,\ldots,d\}$ are a generalized boundary condition set for the pair $\{T_{\max},T_{\min}\}$ if the following hold:
    \\[1mm] 
$(i)$ \hspace*{1.1mm} ${\bf h}_i\in D_{\max}$ for $i\in\{1,\ldots,d\}$. \\[1mm] 
$(ii)$ \hspace*{.1mm} The set $\{{\bf h}_i\}$ is linearly independent in $D_{\max}$ modulo $D_{\min}$, that is $$\sum_{k=1 }^{d}c_k{\bf h}_{k}\in D_{\min}\; \Longrightarrow \;c_1=\cdots=c_k=0.$$ \\[1mm] 
$(iii)$ The set $\{{\bf h}_i\}$ satisfies $[{\bf h}_i,{\bf h}_k]=0$, for $i,k\in\{1,\ldots,d\}$. 
\end{definition}

The self-adjoint extensions of $T_{\min}$ are then provided by the following:
\begin{theorem}[{\cite[Thm. 3.1]{Ev92}}]\label{Thm2.7}
If $\{{\bf h}_j\}$ with $j\in\{1,\ldots,d\}$ is a generalized boundary conditions set for the pair $\{T_{\max},T_{\min}\}$, then all the self-adjoint extensions, $T$, of $T_{\min}$ are characterized by
\begin{align}\label{21}
   & T\,{\bf f}=\tau{\bf f},\nonumber\\
    &{\bf f}\in D(T)=\{{\bf f}\in D_{\max}|\,[{\bf f,{\bf h}}_j]=0,\,j=1,\ldots,d\}.
\end{align}
\end{theorem}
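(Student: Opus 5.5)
The plan is to run the classical Glazman--Krein--Naimark argument on the direct sum. The statement is a GKN-type theorem for the direct-sum operator, and the main tool is the Lagrange (Green) identity
\[
({\bf f},T_{\max}{\bf g})_{\mathcal H}-(T_{\max}{\bf f},{\bf g})_{\mathcal H}=[{\bf f},{\bf g}]
\]
for ${\bf f},{\bf g}\in D_{\max}$. I will prove it interval by interval: integrate $\bar g_j\tau_j f_j r_j$ by parts on compact subintervals of $I_j$. The endpoint limits exist because every other term is in $L^1$, and summing over $j\in{\cal S}$ gives the identity. From this and the definition of $D_{\min}$ one reads off the basic facts. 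First, $[{\bf f},{\bf g}]=0$ whenever one argument lies in $D_{\min}$. Second, $T_{\min}^*=T_{\max}$ and $T_{\max}^*=T_{\min}$; these hold on each $I_j$ by the classical single-interval theory and pass to the orthogonal direct sum \eqref{5}. Third, $\dim(D_{\max}/D_{\min})=2d$, which follows from von Neumann's formula $D_{\max}=D_{\min}\dotplus\ker(T_{\max}-i)\dotplus\ker(T_{\max}+i)$ together with $d^\pm=d$.

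Next I pass to the finite-dimensional quotient. The form $[\cdot,\cdot]$ descends to a well-defined skew-Hermitian form on $Q=D_{\max}/D_{\min}$, and it is nondegenerate there. Indeed, if $[{\bf f},{\bf g}]=0$ for all ${\bf g}\in D_{\max}$, then by the Green identity ${\bf f}\in D(T_{\max}^*)=D_{\min}$. Self-adjoint extensions $T$ with $T_{\min}\subset T\subset T_{\max}$ correspond exactly to subspaces $L=D(T)/D_{\min}\subset Q$ with $L=L^{[\perp]}$, the annihilator of $L$ with respect to $[\cdot,\cdot]$. This uses $T^*=T_{\max}|_{\{{\bf f}:[{\bf f},{\bf g}]=0\ \forall {\bf g}\in D(T)\}}$, which again comes from the Green identity.

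Given the boundary condition set $\{{\bf h}_j\}$, let $L$ be the span of the classes $[{\bf h}_j]$. By (ii) it has dimension $d$, and by (iii) it is isotropic. Nondegeneracy gives $\dim L^{[\perp]}=2d-d=d$. Since $L\subset L^{[\perp]}$, equality holds, $L=L^{[\perp]}$ is Lagrangian, and so the domain \eqref{21}, which equals the preimage of $L^{[\perp]}$, defines a self-adjoint extension. The statement's claim that \emph{all} extensions are of this form I read as follows: every self-adjoint $T$ arises this way for some generalized boundary condition set. To show this, take $D(T)/D_{\min}$, which is Lagrangian of dimension $d$ (since $T=T^*$ forces $L=L^{[\perp]}$ and then the dimension count gives $d$). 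Pick representatives of a basis. They satisfy (i)--(iii), and $D(T)=\{{\bf f}:[{\bf f},{\bf h}_j]=0\}$ because $D(T)=D(T^*)$.

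I expect the main obstacle to be the nondegeneracy and dimension count, namely $\dim Q=2d$ with the form nondegenerate. This requires identifying $T_{\min}^*=T_{\max}$ on the direct sum. My first attempt will be to reduce it to each $I_j$ via the orthogonal decomposition, cite the single-interval result together with \cite[Cor. 2.4]{Ev92} for additivity of deficiency indices, and then check that the real-coefficient property makes $d^+=d^-$.
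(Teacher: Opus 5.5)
Your proposal is correct. The paper does not prove this statement; it imports it from \cite[Thm.~3.1]{Ev92}, and your argument is the standard Glazman--Krein--Naimark route by which that cited result is obtained: Green's identity on the direct sum, $T_{\min}^*=T_{\max}$ and additivity of deficiency indices via the orthogonal direct-sum structure, $\dim(D_{\max}/D_{\min})=2d$, and then Lagrangian subspaces of the nondegenerate form $[\cdot,\cdot]$ on the quotient, with ``all'' correctly read in both directions. One cosmetic point: with the paper's convention $W_j(f,g)=fp_jg'-gp_jf'$ and an inner product linear in the first slot, the identity is $(T_{\max}{\bf f},{\bf g})_{\mathcal H}-({\bf f},T_{\max}{\bf g})_{\mathcal H}=[{\bf f},{\bf g}]$, the opposite sign to yours, which is harmless because you only use that $[{\bf f},{\bf g}]=0$ exactly when the two inner products coincide.
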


Since we have assumed that the minimal operators $T_{{\min},j}$ are bounded from below, we can express the self-adjoint extensions in terms of the generalized boundary conditions in Theorem \ref{At3}. 
\begin{theorem}\label{Thm2.8}
    Assume Hypotheses \ref{Hyp1} and \ref{Hyp2} hold. For $j\in{\cal S}$ and $\lambda_0\in\R$, let $u_{c_j}(\lambda_0,\dott)$ and $\hatt{u}_{c_j}(\lambda_0,\dott)$ be principal and nonprincipal solutions, respectively, at the endpoint $c_j$ normalized so that 
    \begin{align}\label{22}
        W_{j}(\hatt{u}_{c_j}(\lambda_0,\dott),u_{c_j}(\lambda_0,\dott))=1.
    \end{align} 
    Then $T_{[{\cal A}]}$ is a self-adjoint extension of $T_{\min}$ if and only if
    \begin{align}\label{23}
         & T_{[{\cal A}]}\,{\bf f}=\tau{\bf f},\nonumber\\
    &{\bf f}\in D\left(T_{[{\cal A}]}\right)=\left\{{\bf f}\in D_{\max}\Bigg|\,\sum_{j=1}^{N+1}\sum_{c_j\in\{a_j,b_j\}}{\cal A}(c_j)\begin{pmatrix}
	     \wti{f}_{j}(c_j) \\
         \wti{f}'_{j}(c_j)
	 \end{pmatrix}    
    =0\right\},
    \end{align}
where $[{\cal A}]=({\cal A}(a_1),{\cal A}(b_1),\ldots,{\cal A}(b_{N+1}))$ and ${\cal A}(c_j)$ are $d\times 2$ matrices such that
\begin{align}\label{23a}
    \rank\left(\begin{pmatrix}
      {\cal A}(a_1)& {\cal A}(b_1)&\cdots&  {\cal A}(a_{N+1})&  {\cal A}(b_{N+1}) 
    \end{pmatrix}\right)=d,
\end{align}
\begin{align}\label{23b}
\sum_{j=1}^{N+1}\sum_{c_j\in\{a_j,b_j\}}\sigma(c_j){\cal A}(c_j)E{\cal A}^{\ast}(c_j)=0,\quad\textrm{with}\quad E=\begin{pmatrix}
        0&-1\\
        1&0
    \end{pmatrix},    
\end{align}
and 
\begin{align}
   \sigma(c_j)=\begin{cases}
        1\;\textrm{if}\; c_j\; \textrm{is the right limit circle endpoint of $I_j$}, \\
        -1 \;\textrm{if}\; c_j\; \textrm{is the left limit circle endpoint of $I_j$},\\
        0 \;\textrm{if}\; c_j\; \textrm{is a limit point of $I_j$}.
   \end{cases}
\end{align}
\end{theorem}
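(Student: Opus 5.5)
The plan is to reduce Theorem \ref{Thm2.8} to Theorem \ref{Thm2.7} by rewriting the Lagrange form $[\mathbf f,\mathbf g]$ in terms of the generalized boundary values of Theorem \ref{At3}. At a limit point endpoint $W_j(f,\bar g)(c_j)=0$ for all $f,g\in D_{\max}^{(j)}$, so only limit circle endpoints contribute; this is why $\sigma(c_j)=0$ there. At a limit circle endpoint $c_j$ we use the normalization \eqref{22} and the real-valuedness of $u_{c_j},\hatt u_{c_j}$. A Plücker-type identity for Wronskians expresses $W_j(f,\bar g)(c_j)$ through $W_j(u_{c_j},f)$, $W_j(\hatt u_{c_j},f)$ and the same quantities for $\bar g$, and therefore through \eqref{A6}–\eqref{A7}. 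This gives
\begin{align*}
W_j(f,\bar g)(c_j)=\wti f(c_j)\,\overline{\wti g^{\,\prime}(c_j)}-\wti f^{\,\prime}(c_j)\,\overline{\wti g(c_j)}
=-\big(\overline{\wti g(c_j)},\overline{\wti g^{\,\prime}(c_j)}\big)\,E\begin{pmatrix}\wti f(c_j)\\ \wti f^{\,\prime}(c_j)\end{pmatrix},
\end{align*}
up to a sign that I will fix by direct computation. Hence $[\mathbf f,\mathbf g]=\sum_{j}\sum_{c_j}\sigma(c_j)\,\Gamma(c_j)\mathbf g^{*}\,E\,\Gamma(c_j)\mathbf f$ (up to an overall sign), where $\Gamma(c_j)\mathbf f=(\wti f_j(c_j),\wti f_j'(c_j))^{T}$.

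Next I use surjectivity of the boundary map. The map $\Gamma:D_{\max}\to\C^{d}$, $\mathbf f\mapsto(\Gamma(c_j)\mathbf f)_{c_j\ \text{limit circle}}$, is onto and has kernel $D_{\min}$. To prove this, for each limit circle endpoint I take smooth cutoffs of $u_{c_j}(\lambda_0,\cdot)$ and $\hatt u_{c_j}(\lambda_0,\cdot)$ supported near $c_j$. These lie in $D_{\max}$ because the endpoint is limit circle. By \eqref{A3} and \eqref{A6}–\eqref{A7}, their boundary values are $(0,1)$ and $(1,0)$ at $c_j$ and zero at every other endpoint. The kernel statement follows from the definition of $D_{\min}$ together with nondegeneracy of the form.

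For the direction "condition $\Rightarrow$ self-adjoint", let the matrices $\mathcal A(c_j)$ satisfy \eqref{23a} and \eqref{23b}. Let $\mathcal A$ denote the $d\times 2(N+1)$ block row restricted to limit circle endpoints; the limit point columns act on zero data, so their entries are irrelevant. I define $\mathbf h_i\in D_{\max}$ to be the functions whose boundary data form the $i$-th row of $\mathcal A E$, suitably conjugated and signed, so that $[\mathbf f,\mathbf h_i]=0$ is exactly the $i$-th row of the condition in \eqref{23}. Such $\mathbf h_i$ exist by surjectivity. The rank condition \eqref{23a} gives linear independence modulo $D_{\min}$. Computing $[\mathbf h_i,\mathbf h_k]$ with the formula above gives the $(i,k)$ entry of $\pm\sum\sigma\,\mathcal A E E E^{*}\mathcal A^{*}$. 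Since $E^{*}=-E$ and $E^{2}=-I$, this is $\pm\sum\sigma\,\mathcal A E\mathcal A^{*}$, which vanishes by \eqref{23b}. So $\{\mathbf h_i\}$ is a generalized boundary condition set, and Theorem \ref{Thm2.7} applies.

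For the converse, a self-adjoint $T$ has the form \eqref{21} by Theorem \ref{Thm2.7}, and I read off $\mathcal A(c_j)$ from the boundary data of the $\mathbf h_i$, which reverses the construction. The main obstacle is the Wronskian bookkeeping. That means verifying the boundary-value formula for $W_j(f,\bar g)(c_j)$ carefully, including signs and the limit passage justified by Theorem \ref{At3}, and making the conjugations and factors of $E$ match \eqref{23b} exactly. Surjectivity of the boundary map when several limit circle endpoints belong to the same interval needs only the localized cutoffs, so I expect it to be routine.
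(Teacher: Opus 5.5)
Your proposal is essentially the paper's proof, packaged through the boundary map $\Gamma$. The paper follows the same steps:
\begin{enumerate}
\item It uses the Pl\"ucker identity to write $W_j(f_j,\bar h_{i,j})(c_j)=\alpha_{ij}(c_j)\wti f_j(c_j)+\beta_{ij}(c_j)\wti f_j^{\,\prime}(c_j)$ and sets ${\cal A}(c_j)=\sigma(c_j)\,(\alpha_{ij}(c_j)\;\;\beta_{ij}(c_j))$.
\item It turns linear independence modulo $D_{\min}$ into \eqref{23a} by testing against elements of $D_{\max}$ that equal $u_{c_j},\hatt u_{c_j}$ near $c_j$. This is your $\ker\Gamma=D_{\min}$.
\item It turns $[{\bf h}_i,{\bf h}_k]=0$ into \eqref{23b}.
\item For the converse it obtains ${\bf h}_i$ with prescribed boundary data from \cite[Lem.~5.5.7]{GNZ23}. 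This is your surjectivity.
\end{enumerate}

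A few details need correcting.
\begin{enumerate}
\item The first equality in your display is right. Its matrix form, however, is $\big(\overline{\wti g(c_j)},\overline{\wti g^{\,\prime}(c_j)}\big)E\big(\wti f(c_j),\wti f^{\,\prime}(c_j)\big)^{T}$ with no minus sign.
\item $d$ is the number of limit circle endpoints, and each carries two boundary values. So $\Gamma$ maps onto $\C^{2d}$, not $\C^{d}$, and the restricted block row is $d\times 2d$, not $d\times 2(N+1)$.
\item Smooth cutoffs do not in general give elements of $D_{\max}$ under Hypothesis \ref{Hyp1}. Since $p_j$ is only measurable, $p_j(\chi u)'=p_j\chi' u+\chi\,p_ju'$ need not be locally absolutely continuous. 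The limit circle property controls only the endpoint. The transition region has to be handled by Naimark's patching lemma, which is what the paper's cited results supply: \cite[Thm.~4.5]{GLN20} for $\hatt{\bf U},{\bf U}$ in \eqref{35}, and \cite[Lem.~5.5.7]{GNZ23} for the converse.
\item The limit point blocks are not irrelevant to \eqref{23a}. A row supported only in limit point columns counts toward the rank in \eqref{23a}, contributes nothing to \eqref{23b}, and imposes no boundary condition. So your `if' direction fails as soon as there is a limit point endpoint, unless ${\cal A}(c_j)=0$ there (equivalently, the rank is taken over the limit circle columns only). That is exactly what ${\cal A}(c_j)=\sigma(c_j)(\alpha_{ij}\;\;\beta_{ij})$ produces in the `only if' direction, and what the paper's converse via \eqref{46} tacitly assumes.
\end{enumerate}
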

\begin{proof}
    According to Theorem \ref{Thm2.7} the self-adjoint extensions of $T_{\min}$ are characterized by the conditions
    \begin{align}\label{24a}
        [{\bf f},{\bf h}_{i}]=\sum_{j=1}^{N+1}\left[W_{j}(f_{j},\bar{h}_{i,j})(b_j)-W_{j}(f_{j},\bar{h}_{i,j})(a_j)\right]=0,\quad i\in\{1,\ldots,d\}.
    \end{align}
The functions $f_{j}$ and $h_{i,j}$ belong to $D^{(j)}_{\max}$ and therefore $W_{j}(f_{j},\bar{h}_{i,j})(c_j)=0$ identically at a limit point endpoint $c_j$. For this reason, we can write
   \begin{align}\label{24}
        [{\bf f},{\bf h}_{i}]=\sum_{j=1}^{N+1}\sum_{c_j\in\{a_j,b_j\}}\sigma(c_j)W_{j}(f_{j},\bar{h}_{i,j})(c_j)=0,\quad i\in\{1,\ldots,d\}.
\end{align} 

Since the principal and nonprincipal solutions $u_{c_j}(\lambda_0,\dott)$ and $\hatt{u}_{c_j}(\lambda_0,\dott)$ satisfy the normalization conditions \eqref{22} we can write 
    \begin{align}
     \bar{h}_{i,j}(x)=W_{j}(\hatt{u}_{c_j},\bar{h}_{i,j})(c_j)u_{c_j}(\lambda_0,x)+ W_{j}(\bar{h}_{i,j},u_{c_j})(c_j) \hatt{u}_{c_j}(\lambda_0,x).
    \end{align}
 This expression allows us to compute the Wronskian
 \begin{align}
  W_{j}(f_{j},\bar{h}_{i,j})(c_j)&= W_{j}(\hatt{u}_{c_j},\bar{h}_{i,j})(c_j)W_{j}(f_j,u_{c_j})(c_j)+W_{j}(\bar{h}_{i,j},u_{c_j})(c_j)W_{j}(f_j,\hatt{u}_{c_j})(c_j)\nonumber\\
  &=\alpha_{ij}(c_j)\wti{f}_{j}(c_j)+\beta_{ij}(c_j)\wti{f}'_{j}(c_j),
 \end{align}
 where in the last line we have used \eqref{A6} and \eqref{A7} of Theorem \ref{At3} and the definitions
 \begin{align}\label{29}
  \alpha_{ij}(c_j)=W_{j}(\hatt{u}_{c_j},\bar{h}_{i,j})(c_j),\quad \beta_{ij}(c_j)=W_{j}(u_{c_j},\bar{h}_{i,j})(c_j).   
 \end{align}
 The relation in \eqref{24} can then be written as the condition in \eqref{23} in terms of the $d\times 2$ matrix
 \begin{align}\label{29a}
     {\cal A}(c_j)=\sigma(c_j)\begin{pmatrix}
	 \alpha_{ij}(c_j) & \beta_{ij}(c_j) \end{pmatrix},\quad i\in\{1,\ldots,d\}.
 \end{align}

Now, let us define a function ${\bf g}$ as a linear combination of the elements of the boundary conditions set $\{{\bf h}_i\}\in D_{\max}$ with $i\in\{1,\ldots,d\}$, that is ${\bf g}=\sum_{i=1}^{d}\omega_i{\bf h}_i$ with constants $\omega_i\in\C$.
If $\omega_i$ can be chosen such that ${\bf g}\in D_{\min}$, then we have that for any ${\bf p}\in D_{\max}$,
\begin{align}\label{30}
    W_j(p_j,g_j)(c_j)=0,\quad j\in{\cal S}. 
\end{align}
Note that this relation is identically satisfied if $c_j$ is a limit point, so it is convenient to recast it as
  \begin{align}\label{30a}
    \sigma(c_j)W_j(p_j,g_j)(c_j)=0,\quad j\in{\cal S}. 
\end{align}  
In particular, by choosing the functions $\hatt{\bf U}=(\hatt U_1,\ldots,\hatt U_{N+1})$ and ${\bf U}=(U_1,\ldots,U_{N+1})$ in $D_{\max}$ defined as (cf. \cite[Thm. 4.5]{GLN20})
\begin{align}\label{35}
  &\hatt U_{j}(x)= \hatt u_{c_j}(\lambda_0,x),\quad  U_{j}(x)=u_{c_j}(\lambda_0,x),\quad\textrm{near a limit circle endpoint $c_j$},\nonumber\\
  &\hatt U_{j}(x),\; U_{j}(x)\in D_{\max}^{(j)} ,\quad\textrm{near a limit point endpoint $c_j$},
\end{align}
the relations in \eqref{30a} can be written as $4(N+1)$ conditions  
\begin{align}\label{32}
\sigma(c_j) W_j(\hatt u_{c_j},g_j)(c_j)=0,\;\;  \sigma(c_j)W_j( u_{c_j},g_j)(c_j)=0,\quad j\in{\cal S},\; c_{j}=\{a_j,b_j\}.   
\end{align}
By substituting the linear combination $g_j=\sum_{i=1}^{d}\omega_i h_{i,j}$ in the equations \eqref{32} and by using the definitions \eqref{29} we obtain
\begin{align}
    \sum_{i=1}^{d}\omega_i \sigma(c_j)\bar{\alpha}_{ij}(c_j)=0,\;\;\sum_{i=1}^{d}\omega_i \sigma(c_j)\bar{\beta}_{ij}(c_j)=0,\quad j\in{\cal S},\; c_{j}=\{a_j,b_j\},
\end{align}
which can be more concisely written in matrix form as
\begin{align}\label{36}
   M\begin{pmatrix}
        \omega_1\\
        \omega_2\\
        \vdots\\
        \omega_d
    \end{pmatrix}= \begin{pmatrix}
        {\cal A}^{\ast}(a_1)\\
        {\cal A}^{\ast}(b_1)\\
        \vdots\\
        {\cal A}^{\ast}(a_{N+1})\\
        {\cal A}^{\ast}(b_{N+1})\\
    \end{pmatrix}\begin{pmatrix}
        \omega_1\\
        \omega_2\\
        \vdots\\
        \omega_d
    \end{pmatrix}=0,
\end{align}
where $M$ is a $4(N+1)\times d$ matrix and the matrix ${\cal A}^{\ast}(c_j)$ denotes the conjugate transpose of ${\cal A}(c_j)$. 
Since the elements of the boundary conditions set $\{{\bf h}_i\}$ with $i\in\{1,\ldots,d\}$ are linearly independent modulo $D_{\min}$, the only solution of the system \eqref{36} must be the trivial one. This occurs if and only if $\rank(M)=d$. Since the rank is invariant under complex conjugation and transposition, we can conclude that 
\begin{align}
    \rank\left(\begin{pmatrix}
      {\cal A}(a_1)& {\cal A}(b_1)&\cdots&  {\cal A}(a_{N+1})&  {\cal A}(b_{N+1}) 
    \end{pmatrix}\right)=d.
\end{align}

The generalized boundary condition set $\{{\bf h_i}\}$ satisfies condition $(iii)$ of Definition \ref{def2.2}, that is
\begin{align}
     [{\bf h}_i,{\bf h}_{k}]=\sum_{j=1}^{N+1}\sum_{c_j\in\{a_j,b_j\}}\sigma(c_j)W_{j}(h_{i,j},\bar{h}_{k,j})(c_j)=0,\quad i,k\in\{1,\ldots,d\}.
\end{align}
By utilizing the Pl\"ucker relation for Wronskians\footnote{Assuming that $f_j\in AC_{\textrm{loc}}(a, b)$
$j = \{1, 2, 3, 4\}$, the following elementary algebraic identity (Pl\"ucker relation) holds:
$$W(f_1,f_2)(x)W(f_3,f_4)(x)+W(f_1,f_3)(x)W(f_4,f_2)(x)+W(f_1,f_4)(x)W(f_2,f_3)(x)=0,\;\textrm{for a.e.}\;x\in(a,b).$$} 
\begin{align}\label{Eq:Plucker}
  W_{j}(h_{i,j},\bar{h}_{k,j})(c_j) W_{j}(\hatt{u}_{c_j},u_{c_j})(c_j)&+ W_{j}(h_{i,j},\hatt{u}_{c_j})(c_j) W_{j}(u_{c_j},\bar{h}_{k,j})(c_j)\notag\\
  &+W_{j}(h_{i,j},u_{c_j})(c_j) W_{j}(\bar{h}_{k,j},\hatt{u}_{c_j})(c_j)=0,
\end{align}
and the definition \eqref{29} we obtain
\begin{align}
[{\bf h}_i,{\bf h}_{k}]=\sum_{j=1}^{N+1}\sum_{c_j\in\{a_j,b_j\}}\sigma(c_j)\left(\bar{\alpha}_{ij}(c_j)\beta_{kj}(c_j)-\bar{\beta}_{ij}(c_j)\alpha_{kj}(c_j)\right)=0,\quad i,k\in\{1,\ldots,d\}.
\end{align}
By noting that at a limit circle endpoint $c_j$ we have 
\begin{align}
    {\cal A}(c_j)EA^{\ast}(c_j)=\bar{\alpha}_{ij}(c_j)\beta_{kj}(c_j)-\bar{\beta}_{ij}(c_j)\alpha_{kj}(c_j),\quad\textrm{with}\quad E=\begin{pmatrix}
        0&-1\\
        1&0
    \end{pmatrix},
\end{align}
we obtain the condition
\begin{align}
\sum_{j=1}^{N+1}\sum_{c_j\in\{a_j,b_j\}}\sigma(c_j){\cal A}(c_j)EA^{\ast}(c_j)=0.    
\end{align}

Now, let us assume that the operator $T_{[{\cal A}]}$ is defined as in \eqref{23} with matrices ${\cal A}(c_j)$ satisfying the conditions \eqref{23a} and \eqref{23b}. For ${\bf f}\in D_{\min}$ we have that $W_{j}(h_j,f_j)(a_j)=W_j(h_j,f_j)(b_j)=0$ for all ${\bf h}\in D_{\max}$ with $j\in{\cal S}$. In particular, by choosing $\hatt{\bf U},{\bf U}\in D_{\max}$ as in \eqref{35} one finds
$W_{j}(u_{c_j},f_j)(c_j)=W_{j}(\hatt{u}_{c_j},f_j)(c_j)=0$ which implies that ${\bf f}$ satisfies the condition in \eqref{23} and, hence, $D_{\min}\subseteq D(T_{[{\cal A}]})$. 
Let ${\cal A}(c_j)$ be a $d\times 2$ matrix of the form $\begin{pmatrix}a_{ij}(c_j)&b_{ij}(c_j)\end{pmatrix}$. In each interval $I_j$ we utilize \cite[Lem. 5.5.7]{GNZ23} to infer that there exist functions $h_{i,j}\in D_{\max}^{(j)}$ with $i\in\{1,\ldots, d\}$ such that 
\begin{align}\label{46}
  \bar{a}_{ij}(c_j)=\sigma(c_j)W_{j}(\hatt{u}_{c_j},h_{i,j})(c_j),\quad \bar{b}_{ij}(c_j)=\sigma(c_j)W_{j}(u_{c_j},h_{i,j})(c_j).   
 \end{align}
Let $T$ be the extension defined by the condition \eqref{21}. By using the functions $\{\bf h_i\}\in D_{\max}$, leading to \eqref{46}, in the condition \eqref{21}, we find that the boundary conditions defining $T$ coincides with those determined by the matrices ${\cal A}(c_j)$. Therefore we can conclude that $D(T)=D(T_{[{\cal A}]})$. It remains to show that the functions $\{\bf h_i\}\in D_{\max}$, chosen to obtain \eqref{46}, form a generalized boundary set. The functions $\{\bf h_i\}\in D_{\max}$ are linearly independent modulo $D_{\min}$. In fact, let ${\bf g}=c_{1}{\bf h}_{1}+\cdots+c_{d}{\bf h}_{d}\in D_{\min}$ , with $c_i\in\C$, then for any ${\bf f}\in D_{\max}$ we have $\sigma(c_j)W_{j}(f_j,g_j)(c_j)=0$ for $j\in{\cal S}$. By choosing $\hatt{\bf U}=(\hatt U_1,\ldots,\hatt U_{N+1})$ and ${\bf U}=(U_1,\ldots,U_{N+1})$ in $D_{\max}$ as in \eqref{35}, these conditions give, explicitly, 
\begin{align}
    \sum_{i=1}^{d}\bar{c}_{i}a_{ij}(c_j)=0,\quad \sum_{i=1}^{d}\bar{c}_{i}b_{ij}(c_j)=0,\quad j\in{\cal S},\; c_{j}=\{a_j,b_j\}.
\end{align}
 These, in turn, can be written in matrix form as 
 \begin{align}
     \begin{pmatrix}
        {\cal A}^{T}(a_1)\\
        {\cal A}^{T}(b_1)\\
        \vdots\\
        {\cal A}^{T}(a_{N+1})\\
        {\cal A}^{T}(b_{N+1})\\
    \end{pmatrix}\begin{pmatrix}
        \bar{c}_1\\
        \bar{c}_2\\
        \vdots\\
        \bar{c}_d
    \end{pmatrix}=0.
 \end{align}
Because of the rank condition \eqref{23a}, the above linear system has only the trivial solution, and hence, the set $\{{\bf h}_{i}\}$ is linearly independent modulo $D_{\min}$. Finally, condition \eqref{23b} implies that 
\begin{align}
 \sum_{j=1}^{N+1}\sum_{c_j\in\{a_j,b_j\}}\sigma(c_j)\left(\bar{a}_{ij}(c_j)b_{kj}(c_j)-\bar{b}_{ij}(c_j)a_{kj}(c_j)\right)=0,\quad i,k\in\{1,\ldots,d\}.   
\end{align}
Since \eqref{46} holds, the above equation implies via \eqref{Eq:Plucker} that 
\begin{align}
     \sum_{j=1}^{N+1}\sum_{c_j\in\{a_j,b_j\}}\sigma(c_j)W_{j}(h_{i,j},\bar{h}_{k,j})(c_j)=0,\quad i,k\in\{1,\ldots,d\},
\end{align}
which, in turn, guarantees that condition $(iii)$ of Definition \ref{def2.2} is satisfied.  
 \end{proof}

\begin{remark}
    Quantum graphs are a particular case of an $N+1$ intervals system. In a metric graph, intervals (bonds) are connected at the endpoints (vertices). A quantum graph is the pair consisting of a metric graph and a Laplace operator acting on functions defined on the bonds of the graph. The self-adjoint extensions of quantum graphs are particular cases of those provided in Theorem \ref{Thm2.8}.   
\end{remark}

Theorem \ref{Thm2.8} describes the self-adjoint extensions of a very general system in which the intervals can have any relative position (even overlapping) and each interval can have a different Sturm--Liouville differential expression. Since the main goal of this work is to study generalized point potentials, we restrict our analysis to the case in which the intervals are \emph{adjacent} to each other.
\begin{assumption}
For $n\in\{1,\ldots,N\}$ the intervals $I_n$ and $I_{n+1}$ are adjacent to each other, that is $b_{n}\equiv a_{n+1}$. 
\end{assumption}

\begin{definition}(Interior Boundary)\label{intbound}
    Let $n \in \{1,\ldots,N\}$. The common boundary point $R_n = b_n \equiv a_{n+1}$ 
    of two adjacent intervals $I_n$ and $I_{n+1}$ is called an interior boundary 
 if $R_n$ is a limit circle endpoint for both $\tau_n$ on $I_n$ and 
    $\tau_{n+1}$ on $I_{n+1}$, and the self-adjoint extensions in Theorem \ref{Thm2.8} satisfy the following condition: There exist two distinct rows $r_i$ and $r_j$, $i,j\in \{1,\ldots d\}$ such that \\[1mm] 
$(i)$ $r_i$ and $r_j$ are the only rows of  ${\cal A}(b_n)$ and ${\cal A}(a_{n+1})$ that can be non-vanishing. \\[1mm] 
$(ii)$ All remaining matrices ${\cal A}(c_j)$, $c_j\notin\{b_n,a_{n+1}\}$, have zero rows $r_i$ and $r_j$.   
\end{definition}

\begin{remark}
    This is stating that the boundary conditions to the left and right of $R_n$ involve only the generalized boundary values adjacent to $R_n$ and no endpoint values from the remaining intervals. 
\end{remark}

We can now introduce the notion of generalized point potential as a particular case of interior boundary point. 
\begin{definition}(Generalized Point Potential)\label{def}
A generalized point potential at $R_n = b_n \equiv a_{n+1}$ is an interior boundary as in Definition \ref{intbound} for which the $2\times 2$ sub-matrices of  ${\cal A}(b_n)$ and ${\cal A}(a_{n+1})$ formed by the rows $r_{i}$ and $r_j$ have non-zero determinant. 
\end{definition}

The boundary conditions defining a generalized point potential take a more familiar form once the condition \eqref{23b} is taken explicitly into account. Since a self-adjoint extension is invariant under rearrangements of the $d$ conditions in \eqref{23}, we can assume without loss of generality that the non-vanishing rows of ${\cal A}(b_n)$ and ${\cal A}(a_{n+1})$ are consecutive, that is, $r_i$ and $r_{i+1}$. In this case the matrices ${\cal A}(b_n)$ and ${\cal A}(a_{n+1})$ are block matrices with a non-degenerate $2\times2$ matrix formed by the rows $r_i$ and $r_{i+1}$ and zeros everywhere else. We call $P(b_n)$ the non-degenerate $2\times 2$ sub-matrix of ${\cal A}(b_n)$ and $P(a_{n+1})$ that of ${\cal A}(a_{n+1})$. The matrices ${\cal A}(c_j)$ with $c_j\notin\{b_n,a_{n+1}\}$ have a complementary block matrix structure with vanishing rows $r_i$ and $r_{i+1}$ and, in general, non-vanishing remaining rows. The matrix ${\cal A}(c_j)E{\cal A}^{\ast}(c_j)$ can, hence, be computed by using block matrix product. The product ${\cal A}(b_n)E{\cal A}^{\ast}(b_n)$ is a  
$d\times d$ matrix that is zero everywhere except for a non-degenerate $2\times 2$ sub-matrix $ P(b_n) E P^{\ast}(b_n)$ positioned in the $i$-th and $(i+1)$-th columns and the $i$-th and $(i+1)$-th rows. The same occurs for the product ${\cal A}(a_{n+1})E{\cal A}^{\ast}(a_{n+1})$ except that in this case the non-degenerate $2\times 2$ sub-matrix is $P(a_{n+1}) E P^{\ast}(a_{n+1})$. All other products ${\cal A}(c_j)E{\cal A}^{\ast}(c_j)$ with $c_j\notin\{b_n,a_{n+1}\}$ give, instead, $d\times d$ matrices that have a vanishing $2\times 2$ block formed by the $i$-th and $(i+1)$-th columns and the $i$-th and $(i+1)$-th rows. Since ${\cal A}(b_n)$ and ${\cal A}(a_{n+1})$ have the block structure outlined above, the conditions \eqref{23} yield, at the interior boundary $R_n=b_n\equiv a_{n+1}$, the following relation 
\begin{align}\label{51}
    P(b_n)\begin{pmatrix}
	     \wti{f}_{n}(b_n) \\
         \wti{f}'_{n}(b_n)
	 \end{pmatrix}=P(a_{n+1})\begin{pmatrix}
	     \wti{f}_{n+1}(a_{n+1}) \\
         \wti{f}'_{n+1}(a_{n+1})
	 \end{pmatrix},
\end{align}
while condition \eqref{23b} at $R_n=b_n\equiv a_{n+1}$ leads to
\begin{align}\label{52}
  P(b_n) E P^{\ast}(b_n)  = P(a_{n+1}) E P^{\ast}(a_{n+1}).
\end{align}
The constraint in \eqref{52} implies that the boundary condition in \eqref{51} can be rewritten as \cite[Thm. 5.5.8]{GNZ23}
\begin{align}
  \begin{pmatrix}
	     \wti{f}_{n+1}(a_{n+1}) \\
         \wti{f}'_{n+1}(a_{n+1})
	 \end{pmatrix}=e^{i\varphi_n}M_n\begin{pmatrix}
	     \wti{f}_{n}(b_n) \\
         \wti{f}'_{n}(b_n)
	 \end{pmatrix},\quad \varphi_n\in[0,\pi),\quad M_n\in SL(2,\R).  
\end{align}
This argument serves as a proof of the following characterization of a generalized point potential:
\begin{theorem}
    A generalized point potential at $R_n$ is an interior boundary characterized by the matching conditions
    \begin{align}
  \begin{pmatrix}
	     \wti{f}_{n+1}(R_n^{(+)}) \\
         \wti{f}'_{n+1}(R_n^{(+)})
	 \end{pmatrix}=e^{i\varphi_n}M_n\begin{pmatrix}
	     \wti{f}_{n}(R_n^{(-)}) \\
         \wti{f}'_{n}(R_n^{(-)})
	 \end{pmatrix},\quad \varphi_n\in[0,\pi),\quad M_n\in SL(2,\R).  
\end{align}
\end{theorem}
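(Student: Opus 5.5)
The plan is to make the argument sketched just before the statement rigorous. Start from Definition \ref{def}. Reorder the $d$ rows of the boundary conditions so that the two distinguished rows are $r_i,r_{i+1}$; this does not change the domain in \eqref{23}. By Definition \ref{intbound}, these two rows of \eqref{23} involve only the generalized boundary values at $b_n$ and $a_{n+1}$. Writing $P(b_n)$ and $P(a_{n+1})$ for the corresponding $2\times2$ blocks, the two rows read
\[
P(b_n)\,(\wti f_n(b_n),\wti f_n'(b_n))^T + P(a_{n+1})\,(\wti f_{n+1}(a_{n+1}),\wti f'_{n+1}(a_{n+1}))^T=0 .
\]
Replacing $P(a_{n+1})$ by $-P(a_{n+1})$ gives \eqref{51}; this sign change does not affect the quadratic form $PEP^\ast$. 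Next, extract the $(i,i+1)$ diagonal block of the sum \eqref{23b}. The block structure of the ${\cal A}(c_j)$ means every term with $c_j\notin\{b_n,a_{n+1}\}$ contributes zero there. The signs are $\sigma(b_n)=1$ and $\sigma(a_{n+1})=-1$, since $R_n$ is limit circle from both sides. Hence the block gives exactly \eqref{52}.

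The core step is linear algebra. Because both $P$'s are invertible (Definition \ref{def}), \eqref{51} is equivalent to
\[
(\wti f_{n+1},\wti f'_{n+1})^T = X\,(\wti f_n,\wti f'_n)^T,\qquad X=P(a_{n+1})^{-1}P(b_n).
\]
Then \eqref{52} becomes $XEX^\ast=E$. I would first show $|\det X|=1$ by taking determinants, using $\det E=1$. Next, recall that for any $2\times2$ matrix $Y$ one has $YEY^T=(\det Y)E$. Put $X=e^{i\varphi}M$ and choose the phase so that $\det M=1$ with $M$ real. Concretely, the relation $XEX^\ast=E$ with $X^\ast=\bar X^T$ gives $XE\bar X^T=E$. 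Combined with $XEX^T=(\det X)E$, this yields $\bar X=\overline{\det X}\,X$, where I use $\det X\,\overline{\det X}=1$. Writing $\det X=e^{2i\varphi}$, it follows that $e^{-i\varphi}X$ is fixed by complex conjugation, hence real, with determinant $1$. So $M=e^{-i\varphi}X\in SL(2,\R)$, and $\varphi$ can be normalized to $[0,\pi)$ by absorbing $-1$ into $M$. This reproduces \cite[Thm. 5.5.8]{GNZ23}, which I could alternatively cite directly.

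For the converse direction, note that the matching conditions with $e^{i\varphi_n}M_n$ can be written in the form \eqref{51} with $P(b_n)=e^{i\varphi_n}M_n$ and $P(a_{n+1})=I$. These matrices satisfy \eqref{52} by the computation above. Completed by the rows for the remaining endpoints, they satisfy \eqref{23a} and \eqref{23b}, so Theorem \ref{Thm2.8} applies. I expect the main obstacle to be justifying the block decoupling of \eqref{23b}. Condition $(ii)$ of Definition \ref{intbound} gives zero rows $r_i,r_{i+1}$ in the other ${\cal A}(c_j)$, so their products ${\cal A}E{\cal A}^\ast$ vanish in the whole $i,i+1$ rows and columns. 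One still has to check that the off-diagonal blocks of the sum impose nothing extra. They do not, because ${\cal A}(b_n)$ and ${\cal A}(a_{n+1})$ are supported only on those two rows.
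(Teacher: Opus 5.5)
Your proposal is correct and follows the paper's argument. Reordering to adjacent rows, reading off the block structure from Definitions \ref{intbound} and \ref{def}, and using $\sigma(b_n)=1$, $\sigma(a_{n+1})=-1$ gives \eqref{51} and \eqref{52}, from which the $e^{i\varphi_n}M_n$ form follows. The only difference is that the paper cites \cite[Thm. 5.5.8]{GNZ23} for that last step, whereas you derive it directly from $XEX^\ast=E$ and the identity $YEY^T=(\det Y)E$; this derivation is valid, and you are also more careful than the paper about the sign absorbed into $P(a_{n+1})$.
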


\begin{remark}
    The generalized point potential defined above contains the well-known $\delta$, $\delta'$, and $\delta$-$\delta'$ potentials as particular cases. 
\end{remark}

The self-adjoint extensions of the minimal operator $T_{\min}$ in the setting of $N+1$ adjacent intervals with $N$ generalized point potentials can be organized in three types, each characterized by the nature of the two extremal endpoints of the chain of intervals. In particular, the self-adjoint extensions of first type consist of those for which both endpoints $a_1$ and $b_{N+1}$ are limit points. In the second type, one of the two endpoints, either $a_1$ or $b_{N+1}$ is a limit point and the other is a limit circle point. Lastly, the self-adjoint extensions of third type are those for which both endpoints are limit circle. 

For the sake of simplicity, we will denote the left endpoint of the chain of adjacent intervals by $a$ and the right endpoint by $b$. Letting ${\cal Q}=\{1,\ldots, N\}$, then the generalized point potentials are positioned at $R_{i}$, $i\in\cal Q$. We can now describe in detail the three types of self-adjoint extensions of $T_{\min}$ in the setting of $N+1$ adjacent intervals with $N$ generalized point potentials. 

\begin{paragraph}{Two limit point endpoints.}
This class is characterized by both endpoints $a$ and $b$ being limit point. More precisely, when both endpoints $a$ and $b$ are limit point, the dynamics of the scalar field under the influence of $N$ generalized point potentials is described by the self-adjoint extensions
\begin{align}\label{55e}
         & T_{[M]}\,{\bf f}=\tau{\bf f},\nonumber\\
    &{\bf f}\in D\left(T_{[M]}\right)=\left\{{\bf f}\in D_{\max}\Bigg|\, \begin{pmatrix}
	     \wti{f}_{j+1}(R_j^{(+)}) \\
         \wti{f}'_{j+1}(R_j^{(+)})
	 \end{pmatrix}=e^{i\varphi_j}M_j\begin{pmatrix}
	     \wti{f}_{j}(R_j^{(-)}) \\
         \wti{f}'_{j}(R_j^{(-)})
	 \end{pmatrix},\;j\in{\cal Q}\right\},
    \end{align}
with $[{M}]=(e^{i\varphi_1}M_1,\ldots,e^{i\varphi_N}M_N)$, $M_j\in SL(2,\R)$, and $\varphi_j\in[0,\pi)$. 
\end{paragraph}
\begin{paragraph}{One limit point endpoint.}
  We can assume, without loss of generality, that the right endpoint $b$ is limit point. In this case,
the extensions that describe $N$ generalized point potentials with a limit circle endpoint at $a$ and a limit point at $b$ are
\begin{align}\label{56}
         & T_{\alpha,[M]}\,{\bf f}=\tau{\bf f},\quad {\bf f}\in D\left(T_{\alpha,[M]}\right),\nonumber\\
    &D\left(T_{\alpha,[M]}\right)=\left\{{\bf f}\in D_{\max}\Bigg|\,\cos\alpha  \wti{f}_{1}(a)+\sin\alpha\,  \wti{f}'_{1}(a)=0,\; \begin{pmatrix}
	     \wti{f}_{j+1}(R_j^{(+)}) \\
         \wti{f}'_{j+1}(R_j^{(+)})
	 \end{pmatrix}=e^{i\varphi_j}M_j\begin{pmatrix}
	     \wti{f}_{j}(R_j^{(-)}) \\
         \wti{f}'_{j}(R_j^{(-)})
	 \end{pmatrix},\;j\in{\cal Q}\right\},
    \end{align}
with, once again, $M_j\in SL(2,\mathbb{R})$, $\varphi_j\in[0,\pi)$, and $\alpha\in[0,\pi)$.    
\end{paragraph}
\begin{paragraph}{No limit point endpoints.}
If neither endpoint is limit point, the $N$ generalized point potential system can be further divided into two mutually exclusive subclasses. In one case, the two limit circle endpoints are \emph{separated} while in the other case they are \emph{coupled}. The separated subclass is described by the following self-adjoint extensions 
\begin{align}\label{57}
         & T_{\alpha,\beta,[{M}]}\,{\bf f}=\tau{\bf f},\quad {\bf f}\in D\left(T_{\alpha,\beta,[M]}\right),\nonumber\\
    &D\left(T_{\alpha,\beta,[M]}\right)\nonumber\\
    &\quad=\left\{{\bf f}\in D_{\max}\Bigg|\,\begin{array}{lr}
\cos\alpha \wti{f}_{1}(a)+\sin\alpha\, \wti{f}'_{1}(a)=0&\\
\cos\beta \wti{f}_{N+1}(b)-\sin\beta\, \wti{f}'_{N+1}(b)=0
\end{array}\!\!\!\!\!\!\!\!, \begin{pmatrix}
	     \wti{f}_{j+1}(R_j^{(+)}) \\
         \wti{f}'_{j+1}(R_j^{(+)})
	 \end{pmatrix}=e^{i\varphi_j}M_j\begin{pmatrix}
	     \wti{f}_{j}(R_j^{(-)}) \\
         \wti{f}'_{j}(R_j^{(-)})
	 \end{pmatrix},\;j\in{\cal Q}\right\},
    \end{align}
    with $\alpha,\beta\in[0,\pi)$. The coupled subclass is, instead, characterized by 
\begin{align}\label{58}
         & T_{P,\eta,[{M}]}\,{\bf f}=\tau{\bf f},\quad {\bf f}\in D\left(T_{P,\eta,[M]}\right),\nonumber\\
    & D\left(T_{P,\eta,[M]}\right)=\left\{{\bf f}\in D_{\max}\Bigg|\begin{pmatrix}
	     \wti{f}_{N+1}(b) \\
         \wti{f}'_{N+1}(b)
	 \end{pmatrix}=e^{i\eta}P\begin{pmatrix}
	     \wti{f}_{1}(a) \\
         \wti{f}'_{1}(a)
	 \end{pmatrix}, \begin{pmatrix}
	     \wti{f}_{j+1}(R_j^{(+)}) \\
         \wti{f}'_{j+1}(R_j^{(+)})
	 \end{pmatrix}=e^{i\varphi_j}M_j\begin{pmatrix}
	     \wti{f}_{j}(R_j^{(-)}) \\
         \wti{f}'_{j}(R_j^{(-)})
	 \end{pmatrix},\;j\in{\cal Q}\right\},
    \end{align} 
    with $P\in SL(2,\R)$, and $\eta\in[0,\pi)$. The class containing no limit points and only one generalized point potential arises naturally when studying, for instance, scalar fields in piston configurations which are widely used models
for understanding the properties of the Casimir force (see e.g. \cite{fucci15,fucci21} and references therein).
\end{paragraph}

\section{Spectral zeta function}\label{Sec3}

In the paper \cite{FPS25a} we have shown how to construct and analyze the $\z$-function associated with a sequence of complex numbers $\{\mu_i\}_{i\in\N}$ satisfying the conditions 
\cite[Def. 2.1]{FPS25a}:\\[1mm]
$(i)$ $0<|\mu_1|\leq|\mu_2|\leq \dots \to\infty$;\\[1mm]
    $(ii)$ the exponent of convergence $\kappa=\inf\big\{\rho > 0 \, \colon \sum_{n\in\N} |\mu_{n}|^{-\rho} < \infty \big\}$ is finite;\\[1mm]
    $(iii)$  there exists $\varepsilon>0$ and $\Psi\in[-\pi,\pi)$ such that all $\mu_n\in \mathcal{I}_\varepsilon=\left\{z\in\C : \textrm{Arg}(z) \not \in (\Psi-\varepsilon, \Psi+\varepsilon)\right\}$.\\[1mm]
In our $N+1$ interval framework, the endpoints are allowed to be limit point. This implies that the Sturm--Liouville operator on the interval with the limit point endpoint could have a continuous part to its spectrum. Moreover, even if a continuous part is absent, the spectrum could grow too slowly leading to an infinite exponent of convergence. To avoid these situations, for which a $\z$-function is not well-defined, we make the following assumptions:
\begin{assumption}\label{h3a}
$(I)$ The self-adjoint extensions $T_{[M]}$, $T_{\alpha,[M]}$, $T_{\alpha,\beta,[M]}$, and $T_{P,\eta,[M]}$ of $T_{\min}$, denoted by $T_i$, have only discrete spectrum $\sigma(T_i)=\{\lambda_{i,n}\}_{n\in \N}$ where the eigenvalues are counted according to their multiplicity.\\[1mm]
$(II)$ The exponents of convergence $\kappa_i=\limsup_{n\to\infty}(\ln \, n)/(\ln|\lambda_{i,n}|)$ associated with $\sigma(T_i)$ are finite.
\end{assumption}  
The first assumption, together with the fact that $T_{\min}$ is bounded from below, see Remark \ref{rem2.6}, ensures that conditions $(i)$ and $(iii)$ are satisfied, ignoring any zero eigenvalues. The second assumption is simply an equivalent way of stating condition $(ii)$ for the nonzero spectrum. 

When Assumption \ref{h3a} holds, one can define the spectral $\zeta$-function associated with the spectrum $\sigma(T_i)=\{\lambda_{i,n}\}_{n\in \N}$ as 
\begin{equation}\label{zeta}
    \zeta(s;T_i)=\sum_{\underset{\lambda_{i,n}\neq0}{n\in\N}}\lambda_{i,n}^{-s}, \quad \Re(s)>\kappa,
\end{equation}
where, unlike in \cite{FPS25a}, we cannot assume that there are no zero eigenvalues, so are removed from the sum.

To analyze the structure and properties of the spectral $\z$-function we rely on a suitable contour integral representation. It was shown in \cite[Thm. 2.4]{FPS25a} that if the spectrum satisfies the Assumption \ref{h3a} (and hence the conditions $(i)$--$(iii)$ above) then the $\z$-function in \eqref{zeta} has the integral representation 
\begin{align}\label{intzetaH}
    \zeta(s;T_i)=\frac{1}{2\pi i}\int_\gamma dz\,z^{-s}\frac{d}{dz}\ln\big[ z^{-m_{0,i}} H_i(z)\big],\quad \Re(s)>\kappa,
\end{align}
where $m_{0,i}$ represents the multiplicity of the zero eigenvalue, $\gamma$ is a contour in the complex plane enclosing the spectrum $\sigma(T_i)$ in the counterclockwise direction dipping below, hence avoiding, the origin with appropriately chosen branch cut $($see \cite[Fig. 1]{FPS25}$)$, and $H_i(z)$ is the \emph{Hadamard characteristic function} defined, with $k\in\N$ and $k-1\leq\kappa<k$, as
\begin{equation}\label{c1}
   H_i(z)=z^{m_{0,i}}\prod_{n=1}^{\infty}E\left(\frac{z}{\lambda_{i,n}},k-1\right),\quad \text{ with }\quad 
  E\left(\frac{z}{\lambda_{i,n}},k-1\right)=\left(1-\frac{z}{\lambda_{i,n}}\right)\exp\left[\sum_{j=1}^{k-1}\frac{1}{j}\left(\frac{z}{\lambda_{i,n}}\right)^{j}\right].
\end{equation}
While the expression in \eqref{intzetaH} provides an integral representation for the spectral $\zeta$-function, it is restrictive since it relies on the explicit knowledge of the Hadamard characteristic function. This problem can be circumvented by utilizing a more general notion of characteristic function as suggested in \cite{FPS26}. In fact, any entire function $F_{i}(z)$ with finite order $\rho_i$ and zeros exactly at $\sigma(T_i)=\{\lambda_{i,n}\}_{n\in \N}$, counting multiplicities, can be used as a characteristic function in the integral representation \eqref{intzetaH} of the spectral $\z$-function associated with the spectrum $\sigma(T_i)$. More precisely, one can prove the following:
\begin{theorem}[{\cite[Thm. 1.3]{FPS26}}]\label{Thrm3.4}
    Let Assumption \ref{h3a} be satisfied. If a characteristic function $F_{i}(z)$ associated with the spectrum $\sigma(T_i)$ exists and has finite order $\rho_i$, then for $\Re(s)>\rho_i\geq\kappa_i$,
\begin{align}\label{3.2}
\zeta(s;T_i)&=\frac{1}{2\pi i}\int_\gamma dz \, z^{-s} \frac{d}{dz} \ln\left[z^{-m_{0,i}}F_{i}(z)\right]\\
&=e^{is(\pi-\Psi)}\frac{\sin(\pi s)}{\pi}\int_{\delta}^{\infty}dt\,t^{-s}\frac{d}{dt}\ln\left[\left(te^{i\Psi}\right)^{-m_{0,i}}F_{i}\left(te^{i\Psi}\right)\right]-\frac{1}{2\pi i}\int_{C_\delta}dz\,z^{-s}\frac{d}{dz}\ln\left[z^{-m_{0,i}}F_{i}(z)\right],\notag
\end{align}
where $m_{0,i}$ represents the multiplicity of the zero eigenvalue, $C_\delta$ is a clockwise circle with radius $0<\delta<\min\{|\lambda_{i,n}|: \lambda_{i,n}\neq0\}$  parametrized via $z=\delta e^{i\theta}$ from $\theta=\Psi$ to $\theta=\Psi-2\pi$, $\Psi\in (\pi/2,\pi)$ with $R_\Psi=\{z=te^{i\Psi} \, | \, t\in [0,\infty)\},$ $\Psi\in (\pi/2,\pi)$ denoting the branch cut, and $\gamma$ is a counterclockwise contour which encloses the spectrum $\sigma(T_{i})$, dipping below, hence avoiding, the origin $($see \cite[Fig. 1]{FPS25}$)$.  
\end{theorem}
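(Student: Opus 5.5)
We reduce to the Hadamard representation \eqref{intzetaH} by comparing $F_i$ with $H_i$. Since $F_i$ is entire of finite order $\rho_i$ and its zeros are exactly the points of $\sigma(T_i)$, counted with multiplicity (including a zero of order $m_{0,i}$ at the origin), Hadamard's factorization theorem gives
\begin{align*}
F_i(z)=e^{g_i(z)}\,z^{m_{0,i}}\prod_{n}E\Big(\frac{z}{\lambda_{i,n}},k-1\Big)=e^{g_i(z)}H_i(z),
\end{align*}
where $g_i$ is a polynomial with $\deg g_i\le\rho_i$. Here the genus satisfies $k-1\le\kappa_i\le\rho_i$, and the standard inequality $\kappa_i\le\rho_i$ also comes from Hadamard/Borel theory. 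It follows that
\begin{align*}
\frac{d}{dz}\ln\big[z^{-m_{0,i}}F_i(z)\big]=\frac{d}{dz}\ln\big[z^{-m_{0,i}}H_i(z)\big]+g_i'(z).
\end{align*}
Inserting this into the contour integral, the first term reproduces $\zeta(s;T_i)$ by \eqref{intzetaH}, valid for $\Re(s)>\kappa_i$ and hence for $\Re(s)>\rho_i$. It therefore remains to show that $\frac{1}{2\pi i}\int_\gamma dz\,z^{-s}g_i'(z)=0$ for $\Re(s)>\rho_i$.

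This vanishing is the main obstacle. Write $g_i'(z)=\sum_{m=0}^{p-1}c_m z^m$ with $p=\deg g_i\le\rho_i$, so $m<\Re(s)-1$ for each term. The contour $\gamma$ encloses the spectrum, which lies to the right of the cut $R_\Psi$, and dips below the origin. We deform $\gamma$ onto the two sides of the cut $R_\Psi$ together with a small circle $C_\delta$ around the origin. Because $z^{m-s}$ is analytic off the cut, each term $z^{m-s}$ integrates to zero over the closed deformation. To justify the deformation, we close the contour with arcs at infinity, whose contribution is of size $O(R^{m+1-\Re(s)})\to0$ since $\Re(s)>m+1$. Equivalently, one computes directly: the two sides of the cut give $(e^{-2\pi i s}-1)\,e^{i\Psi(m+1-s)}\int_\delta^\infty t^{m-s}\,dt$, and this cancels exactly the $C_\delta$ contribution.

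For the second line of \eqref{3.2}, we deform $\gamma$ in the same way onto the two sides of $R_\Psi$ plus $C_\delta$, taking $\delta$ smaller than the smallest nonzero $|\lambda_{i,n}|$ so that no zeros are crossed. Along the ray we parametrize $z=te^{i\Psi}$, using the arguments $\Psi$ and $\Psi-2\pi$ on the two sides. The logarithmic derivative is single-valued, so the two ray contributions differ only through $z^{-s}$ and combine into
\begin{align*}
\frac{1}{2\pi i}\big(e^{-is\Psi}e^{2\pi i s}-e^{-is\Psi}\big)=e^{is(\pi-\Psi)}\frac{\sin\pi s}{\pi},
\end{align*}
with the orientation chosen to match the statement. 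The circle $C_\delta$ remains as a separate term. Convergence of the ray integral at infinity follows from finite order: away from the zeros, $|F_i'/F_i|$ grows at most polynomially along the ray $R_\Psi$ (for instance by Borel–Carathéodory applied to $\ln F_i$), so the integral converges for $\Re(s)>\rho_i$. Finally, $\Psi\in(\pi/2,\pi)$ keeps the ray outside the spectrum, which lies in $[\lambda_0,\infty)$ because $T_{\min}$ is bounded below (Remark \ref{rem2.6}).
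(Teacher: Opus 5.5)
The paper gives no proof of its own here. It cites \cite[Thm.~1.3]{FPS26}, remarking that any finite-order entire function with the right zeros may replace $H_i$ in \eqref{intzetaH}. Your reduction is exactly that idea. Hadamard gives $F_i=e^{g_i}H_i$ with $\deg g_i\le\rho_i$, and $\int_\gamma z^{-s}g_i'(z)\,dz=0$ for $\Re(s)>\deg g_i$. The most direct reason is that $z^{m+1-s}/(m+1-s)$ is an antiderivative on the slit plane that tends to $0$ at both ends of $\gamma$. This part is fine.

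\textbf{Sign of the $C_\delta$ term.} This is where your write-up does not deliver the displayed formula. Take the orientation that gives $+e^{is(\pi-\Psi)}\sin(\pi s)/\pi$ on the ray: inward along $\arg z=\Psi$, outward along $\arg z=\Psi-2\pi$. Then the small circle is traversed clockwise from $\theta=\Psi$ to $\theta=\Psi-2\pi$ and enters with coefficient $+\frac{1}{2\pi i}$, not $-\frac{1}{2\pi i}$. The reverse orientation flips both the ray sign and the circle direction, so no choice reproduces both signs as stated. The minus sign is correct only with the counterclockwise parametrization $\theta\in[\Psi-2\pi,\Psi]$. That is what the paper actually uses in \eqref{104} and in the identity following \eqref{109}. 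The test case $F(z)=1-z/\lambda$ with $\lambda>0$ and $0<\Re(s)<1$ confirms this. So you need to compute the circle contribution explicitly rather than say it ``remains as a separate term'', and state which convention you prove. Similarly, your factor $(e^{-2\pi i s}-1)$ in the first computation should be $(e^{2\pi i s}-1)$, consistent with your later formula.

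\textbf{Two smaller points.}

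First, the growth estimate. Mere polynomial growth of $F_i'/F_i$ on $R_\Psi$ does not give convergence for all $\Re(s)>\rho_i$; you need $O(t^{\rho_i-1+\varepsilon})$. Also, Borel--Carath\'eodory applied to $\ln F_i$ needs a lower bound on $|F_i|$ that you do not supply. Both follow at once from your factorization,
$F_i'/F_i=g_i'+m_{0,i}/z+\sum_n z^{k-1}\lambda_{i,n}^{1-k}(z-\lambda_{i,n})^{-1}$,
using $|te^{i\Psi}-\lambda|\ge\sin\Psi\,\max\{t,|\lambda|\}$ for real $\lambda$ and $n(r)=O(r^{\kappa_i+\varepsilon})$ for the eigenvalue counting function.

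Second, the deformation itself. Deforming $\gamma$ directly for $F_i$ needs bounds on arcs at infinity, possibly near the positive eigenvalues if $\gamma$ runs close to them. It is cleaner to split off $g_i'$ first; for that piece both sides of the second identity vanish by your explicit computation. Then only the Hadamard part has to be deformed, with the estimates underlying \eqref{intzetaH} in \cite{FPS25a}.
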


To analyze the spectral $\z$-function further, we need to obtain a characteristic function for each of the self-adjoint extensions defining the generalized point potentials introduced in Section \ref{Sec2}.

\subsection{Characteristic function associated with the generalized point potentials}\label{Sec3.1}

In this section we construct the characteristic function for the three types of self-adjoint extensions describing the generalized point potentials under Assumption \ref{h3a}. We consider, first, the case of two limit point endpoints. 

In the intervals $I_{j}$, $j\in\{2,\ldots,N\}$, we introduce an entire fundamental system of solutions $\theta_{j}(z,x,R_{j-1})$ and $\phi_{j}(z,x,R_{j-1})$ of $\tau_j y(z,x)=z y(z,x)$ defined by\footnote{Such a normalized entire fundamental system exists due to each $R_j$ being limit circle nonoscillatory.}
\begin{align}\label{63}
\wti\theta_j(z,R_{j-1},R_{j-1})=\wti\phi_j^{\, \prime}(z,R_{j-1},R_{j-1})=1,\quad \wti\theta_j^{\, \prime}(z,R_{j-1},R_{j-1})=\wti\phi_j(z,R_{j-1},R_{j-1})=0,
\end{align}
such that
\begin{align}\label{64}
    W_j\big(\theta_j(z,\dott,R_{j-1}),\phi_j(z,\dott,R_{j-1})\big)=1.
\end{align}
In the first interval, $I_{1}$, we consider an entire principal solution $u_a(z,x)$ near $a$ with finite growth order in $z$ and, similarly, in the last interval, $I_{N+1}$, we consider an entire principal solution $u_b(z,x)$ near $b$ with finite growth order in $z$.\footnote{Entire solutions exist from discrete spectrum assumptions by \cite[Lem.~2.2, 2.4]{KST_IMRN}, while existence of finite growth order (even minimal growth order) principal solutions is a consequence of Assumption \ref{h3a} $(II)$ by \cite[Sec. 3.3]{FPS26}.}

For $j\in\{2,\ldots,N\}$ and $c_j$ either of the endpoints of the interval $I_j$, we introduce the generalized boundary value matrix as
\begin{align}\label{65}
    \Phi_{j}(c_j)=\begin{pmatrix}
        \wti\theta_j(z,c_j,R_{j-1})  & \wti\phi_j(z,c_j,R_{j-1})\\
        \wti\theta_j^{\, \prime}(z,c_j,R_{j-1}) & \wti\phi_j^{\, \prime}(z,c_j,R_{j-1})
    \end{pmatrix}.
\end{align}
Note that $\Phi_{j}(R_{j-1})=\mathbb{I}$. 
The conditions in \eqref{55e} can be written in terms of generalized boundary value matrices as follows by setting $f_{j+1}=\beta_{j+1}\phi_{j+1}+\alpha_{j+1}\theta_{j+1}$ for $j=2,\dots, N-1$: 
\begin{align}\label{66}
    \begin{pmatrix}
        \alpha_2\\
        \beta_2
    \end{pmatrix}&=e^{i\varphi_1}M_1\begin{pmatrix}
        \alpha_1 \wti u_{a}(z,R_1)\\
        \alpha_1 \wti u'_{a}(z,R_1)
    \end{pmatrix},\nonumber\\
    \begin{pmatrix}
        \alpha_{j+1}\\
        \beta_{j+1}
    \end{pmatrix}&=e^{i\varphi_j}M_j\Phi_{j}(R_j)\begin{pmatrix}
        \alpha_j\\
        \beta_j
    \end{pmatrix},\quad j\in\{2,\ldots,N-1\},\nonumber\\
   \begin{pmatrix}
        \alpha_{N+1} \wti u_{b}(z,R_N)\\
        \alpha_{N+1} \wti u'_{b}(z,R_N)
    \end{pmatrix}&=e^{i\varphi_N}M_N\Phi_{N}(R_N)\begin{pmatrix}
        \alpha_N\\
        \beta_N
    \end{pmatrix},
\end{align}
where $\alpha_k\in\C$ and $\beta_k\in\C$ are unknowns. The linear system \eqref{66} of $2N$ equations has the form of a recursive relation and can be reduced to one in two equations and two unknowns as follows
\begin{align}\label{69}
\begin{pmatrix}
        \alpha_{N+1} \wti u_{b}(z,R_N)\\
        \alpha_{N+1} \wti u'_{b}(z,R_N)
    \end{pmatrix}&=e^{i\Theta_N}{\cal T}_N\begin{pmatrix}
        \alpha_1 \wti u_{a}(z,R_1)\\
        \alpha_1 \wti u'_{a}(z,R_1)
    \end{pmatrix},   
\end{align}
where $\Theta_N=\varphi_N+\cdots+\varphi_1$ and ${\cal T}_{N}$ is defined as
\begin{align}\label{70}
  {\cal T}_N=M_N\Phi_{_{N}}(R_{N})M_{N-1}\Phi_{_{N-1}}(R_{N-1})\cdots M_2\Phi_{_{2}}(R_{2})M_1,  
\end{align}
and represents a \emph{transfer matrix} which propagates generalized boundary data from the first interior boundary to the last and, hence, encodes the cumulative effect of the $N$ point potentials.  
The linear system in \eqref{69} has a non-trivial solution if the entire (by construction) function 
\begin{align}\label{71}
 F_{[M]}(z) =\wti u_{b}(z,R_N)[{\cal T}_{N,21}\wti u_{a}(z,R_1)+{\cal T}_{N,22}\wti u'_{a}(z,R_1)]-\wti u'_{b}(z,R_N) [{\cal T}_{N,11}\wti u_{a}(z,R_1)+{\cal T}_{N,12}\wti u'_{a}(z,R_1)],
\end{align}
vanishes.\footnote{Taking $N=1$, $\varphi_1=0$, $M_1=\mathbb{I}$, and $\tau_1=\tau_2$, this characteristic function becomes $F(z)=W(u_b(z,\dott),u_a(z,\dott))(R_1)=\lim_{x\uparrow b}u_a(z,x)/\hatt{u}_b(z,x)$ by linear independence, recovering \cite[Eq. (3.54)]{FPS26} up to choosing minimal growth principal solutions.} As each of the interior boundary points are limit circle, the endpoint-normalized fundamental solutions $\theta_{j}(z,x,R_{j-1})$ and $\phi_{j}(z,x,R_{j-1})$ are entire functions of $1/2$-order in the spectral parameter $z$. Moreover, one can construct the principal solutions $u_a(x,z)$ and $u_b(x,z)$ so that their order is finite, or even equal to the exponent of convergence of the spectrum \cite[Sec. 3.3]{FPS26}. This implies, together with the fact that the zeros of  $F_{[M]}(z)$ coincide with the spectrum $\sigma(T_{[M]})$, that the function $F_{[M]}(z)$ above is a characteristic function for the system consisting of $N$ generalized point potentials with limit point endpoints.  

The case of a single limit point endpoint is handled in the same manner as above, with the sole modification that the Weyl solution is considered at the limit circle endpoint. More precisely, let us assume, without loss of generality, that the endpoint $a$ is limit circle. Then one considers, in the interval $I_1$, the entire Weyl solution of $\tau_{1}y(z,x)=zy(z,x)$ at $a$ defined by
\begin{align}\label{72}
    \wti \Psi_{\a}(z,a)=-\sin\alpha,\quad \wti \Psi'_{\a}(z,a)=\cos\alpha.
\end{align}
By imposing the conditions \eqref{56} one obtains a linear system of the form \eqref{66} with $\wti u_{a}(z,R_1)$ and $\wti u'_{a}(z,R_1)$ replaced with $\wti \Psi_{a}(z,R_1)$ and $\wti \Psi'_{a}(z,R_1)$, respectively. An argument analogous to the one outlined above leads to the characteristic function
\begin{align}
F_{\alpha,[M]}(z) =\wti u_{b}(z,R_N)[{\cal T}_{N,21}\wti \Psi_{a}(z,R_1)+{\cal T}_{N,22}\wti \Psi'_{a}(z,R_1)]-\wti u'_{b}(z,R_N) [{\cal T}_{N,11}\wti \Psi_{a}(z,R_1)+{\cal T}_{N,12}\wti \Psi'_{a}(z,R_1)],    
\end{align}
for the system of $N$ generalized point potentials with one limit point endpoint. 

The same argument can be used for the case in which the endpoints are limit circles and are separated. In this setting, we consider the entire Weyl solutions at both limit circle endpoints, that is the solution already defined in \eqref{72} and the solution of $\tau_{N+1}y(z,x)=zy(z,x)$ at $b$ defined by
\begin{align}\label{72a}
    \wti \Psi_{\b}(z,b)=\sin\b,\quad \wti \Psi'_{\b}(z,b)=\cos\b.
\end{align} 
By imposing the conditions in \eqref{57} one obtains a linear system analogous to the one in \eqref{66} but with $\wti u_{a}(z,R_1)$ and $\wti u'_{a}(z,R_1)$ replaced with $\wti \Psi_{a}(z,R_1)$ and $\wti \Psi'_{a}(z,R_1)$, respectively and $\wti u_{b}(z,R_N)$ and $\wti u'_{b}(z,R_N)$ replaced with $\wti \Psi_{\b}(z,R_N)$ and $\wti \Psi'_{\b}(z,R_N)$, respectively. This leads, as before, to the characteristic function
\begin{align}\label{73}
F_{\alpha,\b,[M]}(z) =\wti \Psi_{\b}(z,R_N)[{\cal T}_{N,21}\wti \Psi_{a}(z,R_1)+{\cal T}_{N,22}\wti \Psi'_{a}(z,R_1)]-\wti \Psi'_{\b}(z,R_N) [{\cal T}_{N,11}\wti \Psi_{a}(z,R_1)+{\cal T}_{N,12}\wti \Psi'_{a}(z,R_1)],    
\end{align}
for the system of $N$ generalized point potentials with two separated limit circle endpoints. 

Lastly, we consider the case of two coupled limit circle endpoints. In all intervals $I_j$, $j\in\{1,\ldots,N+1\}$, we introduce a fundamental system of solutions of $\tau_j y(z,x)=z y(z,x)$ satisfying the normalization conditions \eqref{63} and \eqref{64} at the left endpoints of each interval. In addition, the definition \eqref{65} of the boundary value matrix will be extended to the first and last intervals. Then the conditions in \eqref{58} lead to the linear system
\begin{align}\label{76}
   \Phi_{N+1}(b) \begin{pmatrix}
        \alpha_{N+1}\\
        \beta_{N+1}
    \end{pmatrix}&=e^{i\eta}P\begin{pmatrix}
        \alpha_1\\
        \beta_1
    \end{pmatrix},\nonumber\\
    \begin{pmatrix}
        \alpha_{l+1}\\
        \beta_{l+1}
    \end{pmatrix}&=e^{i\varphi_l}M_l\Phi_{l}(R_l)\begin{pmatrix}
        \alpha_l\\
        \beta_l
    \end{pmatrix},\quad l\in\{1,\ldots,N\}.
\end{align}
By solving using recursion, the system in \eqref{76} reduces to the following
 \begin{align}\label{74}
     \left[e^{i\Theta_N}\Phi_{N+1}(b){\cal T}_N\Phi_{1}(R_1)-e^{i\eta}P\right]\begin{pmatrix}
        \alpha_{1}\\
        \beta_{1}
    \end{pmatrix}=0  
    \end{align}
The system has a non-trivial solution if the following characteristic function vanishes:
 \begin{align}\label{78}
      F_{P,[M]}(z)=\det\left[e^{i\Theta_N}\Phi_{N+1}(b){\cal T}_N\Phi_{1}(R_1)-e^{i\eta}P\right]. 
    \end{align}
\begin{remark}
    When $N=1$, $\varphi_1=0$, $M_1=\mathbb{I}$, and $\tau_1=\tau_2$ the characteristic function \eqref{78} agrees with the ones given in \cite[Eq. (2.17)]{FGKS21} for $\tau$ regular and \cite[Eq. (3.17)]{FPS25} for $\tau$ quasi-regular. In fact, in this case we get
    \begin{align}\label{75a}
      F_{P,\mathbb{I}}(z)=\det\left[\Phi_{2}(b)\Phi_{1}(R)-e^{i\eta}P\right].
    \end{align}
The fundamental set of solutions in one interval can be written in terms of the fundamental set of solutions in the other interval as
\begin{align}
    \Phi_{2}(x)=\Phi_{1}(x)\mathcal{C},\quad \textrm{with}\quad 
    \mathcal{C}=\begin{pmatrix}
        \phi_{1}^{[1]}(z,R,a)& -\phi_{1}(z,R,a)\\
        -\theta_{1}^{[1]}(z,R,a)& \theta_{1}(z,R,a)
    \end{pmatrix}.
\end{align}
Since $\Phi_{2}(R)=\mathbb{I}$, we have that $\Phi_{1}(R)=\mathcal{C}^{-1}$ and therefore
\begin{align}
    \Phi_{2}(b)\Phi_{1}(R)=\Phi_{1}(b)\mathcal{C}\Phi_{1}(R)=\Phi_{1}(b).
\end{align}
Therefore, \eqref{75a} reduces to 
\begin{align}
     F_{P,\mathbb{I}}(z)=\det\left[\Phi_{1}(b)-e^{i\eta}P\right],
\end{align}
which agrees with \cite[Eq. (2.17)]{FGKS21} for $\tau$ regular and \cite[Eq. (3.17)]{FPS25} for $\tau$ quasi-regular.
\end{remark}

The spectral zeta function for each of these $N$ generalized point potentials settings is obtained by substituting the corresponding characteristic function in the expression \eqref{3.2}. According to Theorem \ref{Thrm3.4}, the integral representation of the spectral zeta function associated with the system of generalized point potentials is only valid to the right of the abscissa of convergence $\Re(s)=\rho_i$, the order of $F_{i}(z)$. The goal is to extend this region of convergence by analytic continuation. The analytic continuation procedure has been described previously in the literature, most recently in \cite{FPS25a}, and it relies on the large-$z$ asymptotic expansion of the characteristic function. 
Deriving a general large-$z$ asymptotic expansion for these characteristic functions is impractical for two main reasons: First, the transfer matrix becomes increasingly complicated as the number of point interactions grows. Second, the generalized boundary values entering the characteristic functions depend on the specific differential expressions in each interval.
Before turning to this in detail, we consider the coalescing of generalized point potentials.

\begin{theorem}\label{Thm3.2}(Limiting Behavior)
    Assume hypotheses \ref{Hyp1} and \ref{Hyp2} hold. If $\tau_j$ is quasi-regular in the interval $I_{j}$, $j\in{\cal S}$, then
    \begin{align}\label{79}
      \lim_{R_j\to R_{j-1}}\Phi_{j}(R_j)=\mathbb{I}.  
    \end{align}
\end{theorem}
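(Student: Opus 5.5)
The statement asks that $\Phi_j(R_j)\to\mathbb{I}$ as the right endpoint $R_j$ of $I_j=(R_{j-1},R_j)$ collapses onto the left endpoint $R_{j-1}$. My plan is to prove this entrywise. Two facts drive it: the entries of $\Phi_j(R_j)$ are generalized boundary values, that is Wronskians of $\theta_j,\phi_j$ against the principal and nonprincipal solutions at $R_j$, as in Theorem \ref{At3}; and $\Phi_j(R_{j-1})=\mathbb{I}$ by the normalization \eqref{63}. So the claim amounts to continuity of the map $c\mapsto\Phi_j(c)$ at $c=R_{j-1}$.

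The first step uses quasi-regularity. If $\tau_j$ is quasi-regular on $I_j$, then after a Liouville/Kummer-type transformation built from $\hatt u_{R_{j-1}}(\lambda_0,\dott)$ and $u_{R_{j-1}}(\lambda_0,\dott)$ (as in \cite{FPS25,GNZ23}), the generalized boundary values become ordinary boundary values for a regular problem. Concretely, set
\[
y=\frac{f}{\hatt u(\lambda_0,\dott)},\qquad w(x)=\frac{u(\lambda_0,x)}{\hatt u(\lambda_0,x)} .
\]
With these, $\wti f$ and $\wti f'$ are the values of $y$ and of its $w$-derivative, and the transformed coefficients are integrable up to the endpoints. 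For this to work across the whole shrinking interval, I will assume the principal and nonprincipal solutions used at $R_j$ are the same pair used at $R_{j-1}$, i.e.\ the restrictions of one fixed pair defined near $R_{j-1}$. This is the natural reading of the limit $R_j\to R_{j-1}$ with the coefficients held fixed.

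The second step is the limit itself. In the transformed variables, $(\wti\theta_j,\wti\theta_j')$ and $(\wti\phi_j,\wti\phi_j')$ solve a first-order system $Y'=A(z,x)Y$ with $A\in L^1$ near $R_{j-1}$ and initial value $\mathbb{I}$. The solution is absolutely continuous, so
\[
\Phi_j(c)-\mathbb{I}=\int_{R_{j-1}}^{c}A(z,t)\,\Phi_j(t)\,dt .
\]
Gronwall's inequality gives
\[
|\Phi_j(c)-\mathbb{I}|\le \exp\Big(\int_{R_{j-1}}^{c}|A|\Big)-1 ,
\]
which tends to $0$ as $c\downarrow R_{j-1}$ by absolute continuity of the Lebesgue integral. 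The convergence is locally uniform in $z$, because $A$ depends affinely on $z$.

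The main obstacle is justifying the transformation near a limit circle but non-regular point, where the weights may blow up. Quasi-regularity, meaning $1/p$, $q$, $r$ integrable after the transformation, is exactly what is needed; I will invoke \cite[Thm. 4.5]{GLN20} and the quasi-regular framework of \cite{FPS25} to supply it.
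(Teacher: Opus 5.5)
Your proposal is correct and is essentially the paper's argument. Both proofs regularize by the nonprincipal solution at $R_{j-1}$, so that $\wti f=f/\hatt u$ and $\wti f^{\,\prime}=p_j\hatt u^{\,2}(f/\hatt u)'$ become the values of an absolutely continuous solution and quasi-derivative of a regular problem, and both then conclude from continuity at $R_{j-1}$ together with the normalization \eqref{63}. Your Gronwall bound adds a rate and local uniformity in $z$. Your explicit choice of one fixed $\hatt u_{R_{j-1}}$ at the moving endpoint also makes precise what the paper's glued regularizing function leaves implicit; only $\hatt u$ enters your formulas, so it does not matter that $u_{R_{j-1}}$ is not literally principal at a regular $R_j$.
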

\begin{proof}
    Let $f_1(z,x)$ and $f_{2}(z,x)$ be solutions of $\tau_{j}y(z,x)=z y(z,x)$ such that $W_{j}(f_1(z,\dott),f_2(z,\dott))=C\neq 0$. By using these functions we can construct a fundamental system of solutions at $R_{j-1}$ as
    \begin{align}
        \phi(z,x,R_{j-1})=\frac{1}{C}\left[\wti f_{1}(z,R_{j-1})f_{2}(z,x)-\wti f_{2}(z,R_{j-1})f_{1}(z,x)\right]\\
        \theta(z,x,R_{j-1})=\frac{1}{C}\left[\wti f'_{2}(z,R_{j-1})f_{1}(z,x)-\wti f'_{1}(z,R_{j-1})f_{2}(z,x)\right].
    \end{align}
Since $\tau_j$ is quasi-regular on $I_{j}$ (both endpoints are limit circle nonoscillatory) we can introduce the regularizing function \cite[Thm. 2.5]{FPS25} 
\begin{align}
    \hatt{u}(\lambda_0,x)=\begin{cases}
        \hatt{u}_{R_{j-1}}(\lambda_0,x) \quad\textrm{for $x$ near $R_{j-1}$},\\
        \hatt{u}_{R_{j}}(\lambda_0,x) \quad\textrm{for $x$ near $R_{j}$},
    \end{cases}
\end{align}
and define
\begin{align}
    v_{i}(z,x)=\frac{f_i(z,x)}{ \hatt{u}(z,x)},\quad i\in\{1,2\},\quad \phi_{r}(z,x,R_{j-1})=\frac{\phi(z,x,R_{j-1})}{ \hatt{u}(z,x)},\quad \theta_r(z,x,R_{j-1})=\frac{\theta(z,x,R_{j-1})}{\hatt{u}(z,x)}.
\end{align}
The functions 
 \begin{align}
        \phi_r(z,x,R_{j-1})=\frac{1}{C}\left[\wti f_{1}(z,R_{j-1})v_{2}(z,x)-\wti f_{2}(z,R_{j-1})v_{1}(z,x)\right]\\
        \theta_r(z,x,R_{j-1})=\frac{1}{C}\left[\wti f'_{2}(z,R_{j-1})v_{1}(z,x)-\wti f'_{1}(z,R_{j-1})v_{2}(z,x)\right],
    \end{align}
are then solutions of the associated regular problem on $I_j$ \cite[Lem. 2.6]{FPS25}
\begin{align}\label{86}
    -(P_j(x)g'(z,x))'+Q_j(x)g(z,x)=zR_j(x)g(z,x),
\end{align}
with 
\begin{align}
P_j(x)=[\hatt{u}(\lambda_0,x)]^{2}p_j(x),\quad R_j(x)= [\hatt{u}(\lambda_0,x)]^{2}r_j(x),\quad Q_j(x)=r_j(x)\hatt{u}(\lambda_0,x)(\tau_j\hatt{u})(\lambda_0,x).   
\end{align}

By setting $d=R_{j}-R_{j-1}$ and by recalling that the solutions $v_1(z,x)$ and $v_{2}(z,x)$ of the regular problem \eqref{86} are in $AC([R_{j-1},R_{j}])$, we can conclude that \cite[Lem. 2.7]{FPS25}
\begin{align}
\lim_{d\to 0^{+}}  \wti\phi(z,R_{j-1}+d,R_{j-1})&=\lim_{d\to 0^{+}} \phi_r(z,R_{j-1}+d,R_{j-1}) \nonumber\\
&=\lim_{d\to 0^{+}}\frac{1}{C}\left[\wti f_{1}(z,R_{j-1})v_{2}(z,R_{j-1}+d)-\wti f_{2}(z,R_{j-1})v_{1}(z,R_{j-1}+d)\right]\nonumber\\
&=\frac{1}{C}\left[\wti f_{1}(z,R_{j-1})v_{2}(z,R_{j-1})-\wti f_{2}(z,R_{j-1})v_{1}(z,R_{j-1})\right]=\phi_r(z,R_{j-1},R_{j-1})\nonumber\\&=\wti\phi(z,R_{j-1},R_{j-1}).
 \end{align}
The same argument can be used to show that 
\begin{align}
\lim_{d\to 0^{+}}  \wti\theta(z,R_{j-1}+d,R_{j-1})=\wti\theta(z,R_{j-1},R_{j-1}). \end{align}

By introducing the notation $g^{[1]}(z,x)=P(x)g'(z,x)$, and by recalling that the solutions $v_1(z,x)$ and $v_{2}(z,x)$ of the regular problem \eqref{86} satisfy and $v^{[1]}_1(z,x),v^{[1]}_{2}(z,x)\in AC([R_{j-1},R_{j}])$ we have
\begin{align}
\lim_{d\to 0^{+}}  \wti\phi'(z,R_{j-1}+d,R_{j-1})&= \lim_{d\to 0^{+}} \phi_{r}^{[1]}(z,R_{j-1}+d,R_{j-1}) \nonumber\\
&=\lim_{d\to 0^{+}}\frac{1}{C}\left[\wti f_{1}(z,R_{j-1})v^{[1]}_{2}(z,R_{j-1}+d)-\wti f_{2}(z,R_{j-1})v^{[1]}_{1}(z,R_{j-1}+d)\right]\nonumber\\
&=\frac{1}{C}\left[\wti f_{1}(z,R_{j-1})v^{[1]}_{2}(z,R_{j-1})-\wti f_{2}(z,R_{j-1})v^{[1]}_{1}(z,R_{j-1})\right]=\phi^{[1]}_r(z,R_{j-1},R_{j-1})\nonumber\\&=\wti\phi'(z,R_{j-1},R_{j-1}).  
\end{align}
The same argument can be utilized to prove that 
\begin{align}
\lim_{d\to 0^{+}}  \wti\theta'(z,R_{j-1}+d,R_{j-1})=\wti\theta'(z,R_{j-1},R_{j-1}). 
\end{align}
The relations \eqref{63} are then used to conclude that \eqref{79} holds.
\end{proof}

By applying this last result recursively, we find that when the $N$ generalized point potentials coalesce we obtain a single point potential described by the matrix 
\begin{align}
    {\cal T}=M_NM_{N-1}\cdots M_1.
\end{align}
It is clear from this expression that when $N$ generalized point potentials merge the resulting point potential is, in general, different than the ones that were previously present. There are, however, cases in which the new point potential is of the same type. In fact, if $M_{i}$, $i\in\{1,\ldots,N\}$, belong to a subgroup of $SL(2,\R)$, then ${\cal T}$ belongs to the same subgroup and, hence, represents the same type of generalized point potential. There are obviously infinitely many subgroups of $SL(2,\R)$, however some of them are of particular interest and are worthy of mention. For instance, when restricting to Schr\"odinger operators, the elements of the maximal unipotent subgroup, which are shear transformations, represent the $\delta'$ potential according to Albeverio and Gesztesy \cite{albeverio}.  The elements of the (upper or lower) Borel subgroup represent, instead, the $\delta$-$\delta'$ potentials (see e.g. \cite{golovaty} and \cite[Sec. 4]{ADK98}). It is worth mentioning that the diagonal subgroup of $SL(2,\R)$ represents the $\delta'$ potential according to Kurasov \cite[Sec. 4]{ADK98}. Theorem  \ref{Thm3.2} implies that coalescing point potentials belonging to the above subgroups give rise to a single point potential in the same subgroup. So, coalescing $\delta$, $\delta'$, and $\delta$-$\delta'$ potentials will give rise to new $\delta$, $\delta'$, and $\delta$-$\delta'$ potentials, respectively, for a fixed $\delta'$ interpretation.      

\subsection{Analytic continuation of the spectral zeta function}

The analytic continuation of the spectral $\z$-function to a region to the left of its abscissa of convergence, has been discussed several times in the literature (see e.g. the monograph \cite{Ki02} and the more recent works \cite{FPS25,FPS25a}) and consists in subtracting, and then adding, a suitable number of terms of the large-$z$ asymptotic expansion of the logarithm of the characteristic function in the integral representation of $\zeta(s;T_i)$.  
Describing this process for general $N$ point potentials is overly complicated and perhaps not particularly illuminating. For this reason, we illustrate the analytic continuation explicitly for two simpler cases that nevertheless retain considerable interest: the first, consists in one generalized point potential with two different expressions, $\tau_1$ and $\tau_2$, defined on the left and the right interval, respectively. This configuration describes interface problems. The second, involves two generalized point potentials with the same operator on each of the three intervals. In this case it is interesting to analyze, for instance, the behavior of the spectral $\z$-function when the point potentials coalesce.       

\subsubsection{Interface problem}

In this case, the differential expression $\tau_1$ is defined on the interval $I_{1}$ while the differential expression $\tau_2$ is defined on $I_2$. The point potential, or the interface, is positioned at $R$ (the common endpoint of $I_1$ and $I_2$). In order for this configuration to be classified as an interface, we must have $\tau_1\neq\tau_2$.

We need to make an assumption regarding the general form of the large-$z$ asymptotic expansion of the generalized boundary 
values that constitute the characteristic functions. 
\begin{assumption}\label{assu1}
    Let $c_i\in\{a,b\}$ denote the endpoints of $I_1\cup I_2$ which can be either limit point or limit circle. Let $f_i(z,x)$ denote the solution which is $L^{2}(I_i,r_idx)$ near $c_i$ when it is a limit point endpoint and the Weyl solution at $c_i$ when it is, instead, a limit circle endpoint. We assume that the generalized boundary value of $f_i(z,x)$ at the interior point $R$ has the large-$z$ asymptotic expansions 
    \begin{align}\label{93}
        \wti f_i(z,R)&=C_i z^{-\chi_i}\exp\left\{\sum_{j=1}^{L_i}d_{j}^{(i)} z^{\beta_{j}^{(i)}}\right\}\sum_{k=0}^{K}A_{k}^{(i)}z^{-\frac{k}{\gamma_i}}+O\left(z^{-\frac{K+1}{\gamma_i}}\right),\\
        \label{94}
        \wti f'_i(z,R)&=\bar{C}_i z^{-\vartheta_i}\exp\left\{\sum_{j=1}^{L_i}d_{j}^{(i)} z^{\beta_{j}^{(i)}}\right\}\sum_{k=0}^{K}B_{k}^{(i)}z^{-\frac{k}{\eta_i}}+O\left(z^{-\frac{K+1}{\eta_i}}\right),
    \end{align}
    with $\chi_i,\beta_{j}^{(i)},\vartheta_i\in\R^{+}$ and $\gamma_i,\eta_i,K,L_i\in\N$.
\end{assumption}
\begin{remark}
    Many of the known special functions possess asymptotic expansions of the form proposed in the Assumption \ref{assu1}, though not all; see, for instance, the P\"oschl--Teller potential studies in \cite{FS25}.
\end{remark}

Even with this assumption, a complete analysis of the analytic continuation of the spectral $\z$-function is not feasible. However, we can study the meromorphic structure of the $\z$-function for $\Re(s)\geq 0$, including the important value $\zeta'(0;T_i)$, since, as it was recently shown in \cite[Sec. 3]{FPS25a}, it only depends on leading term of the small, and certain terms of the large, asymptotic expansion of the characteristic functions. 

From the previous section, it is not difficult to realize that $ F_{[M]}(z)$, $ F_{\alpha,[M]}(z)$, and $ F_{\alpha,\beta,[M]}(z)$ have all the same form. In particular, in the case of the single interface considered here, the three characteristic functions above can be written collectively as
\begin{align}
    F(z)=\wti f_2(z,R)[m_{21}\wti f_1(z,R)+m_{22}\wti f'_1(z,R)]-\wti f'_2(z,R)[m_{11}\wti f_1(z,R)+m_{12}\wti f'_1(z,R) ],
\end{align}
where $f_1(z,x)$ solves $\tau_1f(z,x)=zf(z,x)$ on $I_1$, $f_2(z,x)$ solves $\tau_2f(z,x)=zf(z,x)$ on $I_2$, and $m_{ij}$ are the entries of the $SL(2,\R)$ matrix $M$ describing the interface at $R$. By utilizing the asymptotic expansions assumed in \eqref{93} and \eqref{94} we have, for $\epsilon>0$ small enough,
\begin{align}\label{96}
m_{ij}\wti f_1(z,R)+m_{lk}\wti f'_1(z,R)= \Omega_1(m_{ij},m_{lk})z^{-\mu(m_{ij},m_{lk})}\exp\left\{\sum_{j=1}^{L_1}d_{j}^{(1)} z^{\beta_{j}^{(1)}}\right\}\left[1+O(z^{-\epsilon})\right],
\end{align}
where
\begin{align}
    \mu(m_{ij},m_{lk})=\begin{cases}
    \chi_1, & m_{lk}=0,\\
     \vartheta_1, &  m_{ij}=0,\\
     \textrm{min}\{\chi_1,\vartheta_1\}, & \textrm{otherwise},
    \end{cases}
    \end{align}
    and
    \begin{align} \Omega_1(m_{ij},0)=m_{ij} C_1A_{0}^{(1)},\ \, \Omega_1(0,m_{lk})=m_{lk}\bar{C}_1B_{0}^{(1)},\ \,
\Omega_1(m_{ij},m_{lk})=\begin{cases}
       m_{ij}C_1A_{0}^{(1)} , & \vartheta_1>\chi_1,\\
        m_{lk}\bar{C}_1B_{0}^{(1)} , & \vartheta_1<\chi_1,\\
      m_{ij}C_1A_{0}^{(1)}+ m_{lk}\bar{C}_1B_{0}^{(1)} , & \vartheta_1=\chi_1.
   \end{cases}
\end{align}
By using \eqref{93}, \eqref{94} and \eqref{96} we find, for $\iota>0$ small enough,
\begin{align}\label{99}
F(z)=\Xi(M)z^{-\mu(M)}\exp\left\{\sum_{j=1}^{L_1}d_{j}^{(1)} z^{\beta_{j}^{(1)}}+\sum_{j=1}^{L_2}d_{j}^{(2)} z^{\beta_{j}^{(2)}}\right\}\left[1+O(z^{-\iota})\right],    
\end{align}
with
\begin{align}\label{100}
    \mu(M)&=\textrm{min}\{\chi_2+\mu(m_{21},m_{22}),\vartheta_2+\mu(m_{11},m_{12})\},\nonumber\\
    \Xi(M)&=\begin{cases}
        C_2A_{0}^{(2)}\Omega_1(m_{21},m_{22}) , & \vartheta_2+\mu(m_{11},m_{12})>\chi_2+\mu(m_{21},m_{22}),\\
        -\bar{C}_2B_{0}^{(2)}\Omega_1(m_{11},m_{12}) , & \vartheta_2+\mu(m_{11},m_{12})<\chi_2+\mu(m_{21},m_{22}),\\
        C_2A_{0}^{(2)}\Omega_1(m_{21},m_{22})-\bar{C}_2B_{0}^{(2)}\Omega_1(m_{11},m_{12}) , & \vartheta_2+\mu(m_{11},m_{12})=\chi_2+\mu(m_{21},m_{22}).
    \end{cases}
\end{align}
To analytically continue the spectral $\z$-function \eqref{3.2} to a neighborhood of $s=0$ we subtract, and then add, from the integrand in the first integral of \eqref{3.2}, the asymptotic expansion of $\ln \left[z^{-m_{0}}F(z)\right]$ which, according to \eqref{99} reads (where we have chosen the analytic branch of the logarithm for large-$z$ about the branch cut as it avoids the zeros of $F(z)$)
\begin{align}
   \ln\left[ z^{-m_{0}}F(z)\right]=\ln\Xi(M)-[\mu(M)+m_0]\ln z+\sum_{j=1}^{L_1}d_{j}^{(1)} z^{\beta_{j}^{(1)}}+\sum_{j=1}^{L_2}d_{j}^{(2)} z^{\beta_{j}^{(2)}}+O(z^{-\iota}). 
\end{align}
Performing this operation leads to the expression
\begin{align}\label{102}
    \zeta(s;T_i)=Z(s)+Q(s)+{\cal L}(s),
\end{align}
where from Theorem \ref{Thrm3.4}
\begin{align}\label{103}
 Z(s)=e^{is(\pi-\Psi)}\frac{\sin(\pi s)}{\pi}\int_{\delta}^{\infty}dt\,t^{-s}\frac{d}{dt}\Bigg\{\ln\left[\left(te^{i\Psi}\right)^{-m_{0}}F\left(te^{i\Psi}\right)\right] -\ln\Xi(M)\nonumber\\
 +[\mu(M)+m_0]\ln\left(te^{i\Psi}\right)-\sum_{j=1}^{L_1}d_{j}^{(1)} \left(te^{i\Psi}\right)^{\beta_{j}^{(1)}}-\sum_{j=1}^{L_2}d_{j}^{(2)} \left(te^{i\Psi}\right)^{\beta_{j}^{(2)}} \Bigg\}, 
\end{align}
which is holomorphic for $\Re(s)>-\iota$,
\begin{align}\label{104}
    Q(s)=-\frac{\delta^{-s}}{2\pi i}\int_{\Psi-2\pi}^{\Psi}d\theta\,e^{-si\theta}\frac{d}{d\theta}\ln\,F\left(\delta e^{i\theta}\right),
\end{align}
that is an entire function of $s\in\C$, and
\begin{align}\label{105}
    {\cal L}(s)=e^{is(\pi-\Psi)}\frac{\sin(\pi s)}{\pi}\delta^{-s}\left[-\frac{\mu(M)+m_0}{s}+\sum_{j=1}^{L_1}\frac{d_j^{(1)}\beta_j^{(1)}(\delta e^{i\Psi})^{\beta_j^{(1)}}}{s-\beta_j^{(1)}}+\sum_{j=1}^{L_2}\frac{d_j^{(2)}\beta_j^{(2)}(\delta e^{i\Psi})^{\beta_j^{(2)}}}{s-\beta_j^{(2)}}\right],
\end{align}
is a meromorphic function that contains the information about the poles of $\z(s;T_i)$ in the region $\Re(s)>-\iota$. 
It is clear from \eqref{105} that if $\beta_j^{(1)},\beta_j^{(2)}\notin\N$ then $\z(s;T_i)$ has simple poles at $s=\beta_j^{(1)}$ and $s=\beta_j^{(2)}$ with residue 
\begin{align}
    \textrm{Res}[\zeta(s;T_i);\, s=\beta_j^{(n)}]=e^{i\pi\beta_j^{(n)}}\frac{\sin(\pi\beta_j^{(n)})}{\pi}d_j^{(n)}\beta_{j}^{(n)},\quad n\in\{1,2\}.
\end{align}
Note that when $\beta^{(1)}_l=\beta_m^{(2)}=\beta\notin\N$, with $l\in\{1,\ldots,L_1\}$ and $m\in\{1,\ldots,L_2\}$, two poles merge into a single one at $s=\beta$ with residue
\begin{align}
    \textrm{Res}[\zeta(s;T_i);\, s=\beta]=e^{i\pi\beta}\frac{\sin(\pi\beta)}{\pi}\beta\left(d_l^{(1)}+d_m^{(2)}\right). 
\end{align}
If $\beta_j^{(n)}\in\N$ with $n\in\{1,2\}$, then the spectral $\z$-function is regular at $s=\beta_j^{(n)}$ due to the zero of the prefactor $\sin(\pi s)$, and has the value
\begin{align}
    \zeta(\beta_j^{(n)};T_i)=d_j^{(n)}\beta_j^{(n)},\quad \beta_j^{(n)}\in\N .
\end{align}
We would like to point out that under Assumption \ref{assu1} the spectral $\z$-function is regular at the point $s=0$ and
\begin{align}\label{108a}
    \zeta(0;T_i)=-\mu(M)-m_0.
\end{align}

The derivative at $s=0$ can be computed by using the explicit expressions \eqref{103}-\eqref{105}. From \eqref{103} we have
\begin{align}\label{109}
    Z'(0)&=\int_{\delta}^{\infty}dt\,\frac{d}{dt}\Bigg\{\ln\left[\left(te^{i\Psi}\right)^{-m_{0}}F\left(te^{i\Psi}\right)\right] -\ln\Xi(M)\nonumber\\
 &+[\mu(M)+m_0]\ln\left(te^{i\Psi}\right)-\sum_{j=1}^{L_1}d_{j}^{(1)} \left(te^{i\Psi}\right)^{\beta_{j}^{(1)}}-\sum_{j=1}^{L_2}d_{j}^{(2)} \left(te^{i\Psi}\right)^{\beta_{j}^{(2)}} \Bigg\}\\
    &\hspace{-.7cm}=-\ln\left[\left(\delta e^{i\Psi}\right)^{-m_{0}}F\left(\delta e^{i\Psi}\right)\right]+\ln\Xi(M)-[\mu(M)+m_0]\ln\left(\delta e^{i\Psi}\right)+\sum_{j=1}^{L_1}d_{j}^{(1)} \left(\delta e^{i\Psi}\right)^{\beta_{j}^{(1)}}+\sum_{j=1}^{L_2}d_{j}^{(2)} \left(\delta e^{i\Psi}\right)^{\beta_{j}^{(2)}}.\nonumber
\end{align}
For the function $Q(s)$, one can show (cf. \cite[Thm. 2.6]{FPS25a}) that for $\Re(s)<1$,
\begin{align}
-\frac{1}{2\pi i}\int_{C_\delta}dz\,z^{-s}\frac{d}{dz}\ln\left[z^{-m_{0}}F\left(z\right)\right]=e^{is(\pi-\Psi)}\frac{\sin(\pi s)}{\pi}\int_0^{\delta}dt\,t^{-s}\frac{d}{dt}\ln\left[\left(te^{i\Psi}\right)^{-m_0}F\left(te^{i\Psi}\right)\right],
\end{align}
which can be used to compute
\begin{align}\label{111}
    Q'(0)=\int_0^{\delta}dt\,\frac{d}{dt}\ln\left[\left(te^{i\Psi}\right)^{-m_0}F\left(te^{i\Psi}\right)\right]=\ln\left[\left(\delta e^{i\Psi}\right)^{-m_{0}}F\left(\delta e^{i\Psi}\right)\right]-\lim_{t\to0^{+}}\ln\left[\left(te^{i\Psi}\right)^{-m_0}F\left(te^{i\Psi}\right)\right].
\end{align}
Since $F(z)$ is an entire function, it has the small-$z$ asymptotic expansion 
\begin{align}
    F(z)=F_{m_0}z^{m_0}+O(z^{m_0+1}),\quad F_{m_0}\neq 0,
\end{align}
so \eqref{111} becomes
\begin{align}\label{113}
   Q'(0)= \ln\left[\left(\delta e^{i\Psi}\right)^{-m_{0}}F\left(\delta e^{i\Psi}\right)\right]-\ln F_{m_0}.
\end{align}
Lastly, from \eqref{105} we obtain
\begin{align}\label{114}
    {\cal L}'(0)=[\mu(M)+m_0][-i(\pi-\Psi)+\ln \delta]-\sum_{j=1}^{L_1}d_{j}^{(1)} \left(\delta e^{i\Psi}\right)^{\beta_{j}^{(1)}}-\sum_{j=1}^{L_2}d_{j}^{(2)} \left(\delta e^{i\Psi}\right)^{\beta_{j}^{(2)}}.
\end{align}
By adding \eqref{109}, \eqref{113}, and \eqref{114} we get
\begin{align}\label{116}
    \zeta'(0;T_i)=-i\pi[\mu(M)+m_0]+\ln\Xi(M)-\ln F_{m_0}. 
\end{align}

\medskip

We now consider a specific example that explicitly illustrates the interface setting and the dependence of $\z'(0;T_i)$ on the matrix $M$. In the interval $I_1=(0,R)$ we consider the ordinary Bessel differential expression while in the interval $I_{2}=(R,\infty)$ we have the Airy differential expression $\tau_2$, that is 
\begin{align}
    \tau_1=-\frac{d^2}{dx^2}+\left(\nu^{2}-\frac{1}{4}\right)x^{-2},\quad \nu\in[0,\infty),\quad\textrm{and}\quad \tau_2=-\frac{d^2}{dx^2}+x.
\end{align}
The point $x=0$ is limit circle for $\tau_1$ when $\nu\in[0,1)$ (it is regular for $\nu=1/2$), and is limit point when $\nu\in[1,\infty)$. The interface is placed at a regular point $R$ for both operators, and $x=\infty$ is a limit point for the Airy operator $\tau_2$. Solutions to $\tau_{1} u=zu$ are given for $\nu\geq 0$ by (cf.\ \cite[No.~2.162, p.~440]{Ka61})
\begin{align}\label{114a}
f_{1,\nu}(z,x)=\sqrt{x} J_{\nu}\big(\sqrt{z}\, x\big),\quad\textrm{and}\quad g_{1,\nu}(z,x)=
\sqrt{x} Y_{\nu}\big(\sqrt{z}\, x\big),
\end{align}
where $J_{\mu}(\dott), Y_{\mu}(\dott)$ are the standard Bessel functions of order $\mu \in \R$ 
(cf.\ \cite[Ch.~9]{AS72}). The principal and nonprincipal solutions $u_{0,\nu}(0,x)$ and $\hatt u_{0,\nu}(0,x)$ of $\tau_1 u=0$ at the point $x=0$ are
\begin{align}
   u_{0,\nu}(0,x)=x^{\frac{1}{2}+\nu} ,\quad\nu\in[0,\infty),\quad\textrm{and}\quad 
   \hatt u_{0,\nu}(0,x)=\begin{cases} (2\nu)^{-1} x^{\frac{1}{2}-\nu} , & \nu \in (0,\infty),   \\
\sqrt{x} \ln(1/x), & \nu =0.  \end{cases}
\end{align}
In the interval $I_2$ the solutions of the Airy equation $\tau_2 u=zu$ are
\begin{align}
    f_{2}(z,x)=\textrm{Ai}(x-z),\quad\textrm{and}\quad g_{2}(z,x)=\textrm{Bi}(x-z),
\end{align}
of which $\textrm{Ai}(x-z)$ is the only one in $L^{2}((R,\infty),dx)$. 

We now need to consider two separate cases. One in which $x=0$ is a limit circle (including regular) point and the other in which it is a limit point. We illustrate, first, the case in which $x=0$ is limit circle.  

To construct the characteristic function $F_{\alpha,[M]}(z)$ we need the Weyl solutions (cf. \eqref{72}) at $x=0$. The generalized boundary values of $f_{1,\nu}(z,x)$ and $g_{1,\nu}(z,x)$ at $x=0$ are for $\nu\in(0,1)$ (cf. \eqref{A6} and \eqref{A7})
\begin{align}
 \wti f_{1,\nu}(z,0)&=0,\quad \wti g_{1,\nu}(z,0)=-\frac{2^{\nu+1}}{\pi}\Gamma(\nu+1)z^{-\nu/2}, \label{117}  \\
 \wti f'_{1,\nu}(z,0)&=\frac{z^{\nu/2}}{2^{\nu}\Gamma(\nu+1)},\quad \wti g^{\prime}_{1,\nu}(z,0)=-\frac{\cos(\pi\nu)\Gamma(-\nu)}{2^{\nu}\pi}z^{\nu/2}. \label{118}
\end{align}
Note that for $\nu=1/2$, $x=0$ is a regular point and \eqref{117}-\eqref{118} reduce to the ordinary boundary values of \eqref{114a}.  
For $\nu=0$ we have, instead,
\begin{align}
 \wti f_{1,0}(z,0)&=0,\quad \wti g_{1,0}(z,0)=-\frac{2}{\pi},  \\
 \wti f'_{1,0}(z,0)&=1,\quad \wti g^{\prime}_{1,0}(z,0)=\frac{2}{\pi}\left[\ln\left(\frac{\sqrt{z}}{2}\right)+\gamma_E\right],
\end{align}
where $\gamma_E$ is the Euler-Mascheroni constant. 
By imposing the conditions \eqref{72} at $x=0$ we get the Weyl solution and its derivative in $I_1$
\begin{align}\label{125}
    \Psi_{\alpha}(z,x,\nu)&=\sqrt{x}\left[A_{\nu}(\alpha)z^{-\nu/2}J_{\nu}(\sqrt{z}\,x)+B_{\nu}(\alpha)z^{\nu/2}Y_{\nu}(\sqrt{z}\,x)\right],\nonumber\\
    \Psi'_{\alpha}(z,x,\nu)&=\frac{1}{2x} \Psi_{\alpha}(z,x,\nu)+\sqrt{x}\left[A_{\nu}(\alpha)z^{(1-\nu)/2}J'_{\nu}(\sqrt{z}\,x)+B_{\nu}(\alpha)z^{(1+\nu)/2}Y'_{\nu}(\sqrt{z}\,x)\right],
\end{align} 
where
\begin{align}
  A_{\nu}(\alpha)&=\begin{cases}\label{126}
      2^{\nu}\Gamma(\nu+1)\cos(\alpha)-\cot(\nu\pi)z^\nu 2^{-\nu-1}\pi(\Gamma(\nu+1))^{-1}\sin(\alpha),& \nu\in(0,1),\\
      \cos(\alpha)-\sin(\alpha)\left(\frac{1}{2}\ln z-\ln 2+\gamma_E\right),& \nu=0,
  \end{cases}\\
  B_{\nu}(\alpha)&=\begin{cases}\label{127}
      2^{-\nu-1}\pi(\Gamma(\nu+1))^{-1}\sin(\alpha),& \nu\in(0,1),\\
      \frac{\pi}{2}\sin(\alpha), & \nu=0.
  \end{cases}
\end{align}
The characteristic function is then given by
\begin{align}
    F_{\alpha,[M]}(z)=\textrm{Ai}(R-z)[m_{21}\Psi_{\alpha}(z,R,\nu)+m_{22}\Psi'_{\alpha}(z,R,\nu)]-\textrm{Ai}'(R-z)[m_{11}\Psi_{\alpha}(z,R,\nu)+m_{12}\Psi'_{\alpha}(z,R,\nu)],
\end{align}
with $\Psi_{\alpha}(z,R,\nu)$ and $\Psi'_{\alpha}(z,R,\nu)$ denoting the value of the Weyl solution and its derivative in \eqref{125} at the regular point $R$. To evaluate the leading terms of the large-$z$ asymptotic expansion valid for $\textrm{Im}(z)>0$, we rewrite the Weyl solution and the Airy function in terms of Hankel functions as follows (cf. \cite[Eq. 10.4.4] {DLMF}) 
\begin{align}\label{129}
    &\Psi_{\alpha}(z,x,\nu)=\frac{\sqrt{x}}{2}\left[\left(\mathcal{A}_{\nu}(\alpha)z^{-\nu/2}-i\mathcal{B}_{\nu}(\alpha)z^{\nu/2}\right)H_{\nu}^{(1)}(\sqrt{z}\,x)+\left(\mathcal{A}_{\nu}(\alpha)z^{-\nu/2}+i\mathcal{B}_{\nu}(\alpha)z^{\nu/2}\right)H_{\nu}^{(2)}(\sqrt{z}\,x)\right],\nonumber\\
    &\Psi'_{\alpha}(z,x,\nu)=\frac{1}{2x} \Psi_{\alpha}(z,x,\nu)\nonumber\\
    &+\frac{\sqrt{x}}{2}\left[\left(\mathcal{A}_{\nu}(\alpha)z^{(1-\nu)/2}-i\mathcal{B}_{\nu}(\alpha)z^{(1+\nu)/2}\right)H_{\nu}^{(1)\prime}(\sqrt{z}\,x)+\left(\mathcal{A}_{\nu}(\alpha)z^{(1-\nu)/2}+i\mathcal{B}_{\nu}(\alpha)z^{(1+\nu)/2}\right)H_{\nu}^{(2)\prime}(\sqrt{z}\,x)\right],
\end{align}
where
\begin{align}
\mathcal{A}_\nu(\alpha)&=A_\nu(\alpha)+\cot(\nu\pi)z^\nu 2^{-\nu-1}\pi(\Gamma(\nu+1))^{-1}\sin(\alpha),\\ \mathcal{B}_\nu(\alpha)&=B_\nu(\alpha)+i\cot(\nu\pi) 2^{-\nu-1}\pi(\Gamma(\nu+1))^{-1}\sin(\alpha),
\end{align}
and \cite[Eqs. 9.6.6 and 9.6.7] {DLMF}
\begin{align}
 \Ai(-z)=\frac{1}{2}\sqrt{\frac{z}{3}}\left(e^{-i\pi/6}H_{1/3}^{(2)}\left(\frac{2}{3}z^{3/2}\right)+e^{i\pi/6}H_{1/3}^{(1)}\left(\frac{2}{3}z^{3/2}\right)\right),
\end{align}
\begin{align}
    \textrm{Ai}'\left(-z\right)=\frac{z}{2\sqrt{3}}\left(e^{-\pi i/6}{H^{(1)}_
{2/3}}\left(\frac{2}{3}z^{3/2}\right)+e^{\pi i/6}{H^{(2)}_{2/3}}\left(\frac{2}{3}z^{3/2}\right)\right).
\end{align}

The large-$z$ asymptotic expansion of $\Psi_{\alpha}(z,R,\nu)$ and $\Psi'_{\alpha}(z,R,\nu)$ depend on whether or not $\alpha=0$ since from equation \eqref{126} we have that $\mathcal{B}_{\nu}(0)=0$ when $\alpha=0$. We will, therefore, distinguish between the cases $\alpha=0$ and $\alpha\neq0$.

\paragraph{Case $\alpha\neq 0$.} 
According to \cite[Eqs. 10.2.5 and 10.2.6]{DLMF} the Hankel function $H_{1/3}^{(2)}\left((2/3)z^{3/2}\right)$ is dominant compared to $H_{1/3}^{(1)}\left((2/3)z^{3/2}\right)$ for $z\to\infty$ with $\textrm{Im}(z)>0$. This implies that by 
using the expression \cite[Eq. 10.17.6]{DLMF},
\begin{align}
    H_{\nu}^{(2)}(w)=\left(\frac{2}{\pi w}\right)^{1/2}e^{-i\left(w-\frac{\nu\pi}{2}-\frac{\pi}{4}\right)}\left(1+O(w^{-1})\right),\quad H_{\nu}^{(2)\prime}(w)=-i\left(\frac{2}{\pi w}\right)^{1/2}e^{-i\left(w-\frac{\nu\pi}{2}-\frac{\pi}{4}\right)}\left(1+O(w^{-1})\right),\nonumber
\end{align}
where the expansion of the derivative is obtained by using \cite[Eq. 10.6.1]{DLMF}, that we have, for large values of $z$ with $\textrm{Im}(z^{1/2})>0$, 
\begin{align}\label{132}
    \Psi_{\alpha}(z,R,\nu)&=i\frac{\mathcal{B}_{\nu}(\alpha)}{\sqrt{2\pi}}z^{\frac{\nu}{2}-\frac{1}{4}}\exp\left\{-i\sqrt{z}R+\frac{i\pi\nu}{2}+\frac{i\pi}{4}\right\}\left(1+O(z^{-\nu})\right),\nonumber\\
    \Psi'_{\alpha}(z,R,\nu)&=\frac{\mathcal{B}_{\nu}(\alpha)}{\sqrt{2\pi}}z^{\frac{\nu}{2}+\frac{1}{4}}\exp\left\{-i\sqrt{z}R+\frac{i\pi\nu}{2}+\frac{i\pi}{4}\right\}\left(1+O(z^{-\nu})\right). 
\end{align}

For the Airy functions we obtain 
\begin{align}\label{133}
    \textrm{Ai}(R-z)&=\frac{1}{2\sqrt{\pi}}(z-R)^{-1/4}\exp\left\{-i\frac{2}{3}(z-R)^{3/2}+\frac{i\pi}{4}\right\}\left(1+O\left((z-R)^{-3/2}\right)\right)\nonumber\\
    &=\frac{1}{2\sqrt{\pi}}z^{-1/4}\exp\left\{-i\frac{2}{3}z^{3/2}+i R\,\sqrt{z}+\frac{i\pi}{4}\right\}\left(1+O\left(z^{-1/2}\right)\right),\nonumber\\
     \textrm{Ai}'(R-z)&=\frac{i}{2\sqrt{\pi}}(z-R)^{1/4}\exp\left\{-i\frac{2}{3}(z-R)^{3/2}+\frac{i\pi}{4}\right\}\left(1+O\left((z-R)^{-3/2}\right)\right) \nonumber\\
     &=\frac{1}{2\sqrt{\pi}}z^{1/4}\exp\left\{-i\frac{2}{3}z^{3/2}+i R\,\sqrt{z}+\frac{i\pi}{4}\right\}\left(1+O\left(z^{-1/2}\right)\right).
\end{align}

By comparing \eqref{132} and \eqref{133} with \eqref{93} and \eqref{94} we have $A_{0}^{(1)}=A_{0}^{(2)}=B_{0}^{(1)}=B_{0}^{(2)}=1$,
\begin{align}
    C_1&=\frac{iB_{\nu}(\alpha)}{\sqrt{2\pi}}\exp\left\{\frac{i\pi\nu}{2}+\frac{i\pi}{4}\right\},\quad C_2=\frac{1}{2\sqrt{\pi}}\exp\left\{\frac{i\pi}{4}\right\},\quad \bar{C}_1=-iC_1,\quad\bar{C}_{2}=C_2,\nonumber\\
    \chi_1&=\frac{1}{4}-\frac{\nu}{2},\quad d_{1}^{(1)}=-iR,\quad \beta_{1}^{(1)}=\frac{1}{2},\quad \vartheta_1=\chi_1-\frac{1}{2},
    \nonumber\\
   \chi_2&=\frac{1}{4},\quad d_{1}^{(2)}=-\frac{2i}{3},\quad d_{2}^{(2)}=iR,\quad \beta_{1}^{(2)}=\frac{3}{2},\quad \beta_{2}^{(2)}=\frac{1}{2},\quad \vartheta_2=-\chi_2.
\end{align}

From these expressions we have 
\begin{align}
    \mu(m_{21},m_{22})=\begin{cases}
    \frac{1}{4}-\frac{\nu}{2}, & m_{22}=0, \\
     -\frac{1}{4}-\frac{\nu}{2}, &  m_{22}\neq 0,
    \end{cases}\quad \mu(m_{11},m_{12})=\begin{cases}
    \frac{1}{4}-\frac{\nu}{2}, & m_{12}=0, \\
     -\frac{1}{4}-\frac{\nu}{2}, &  m_{12}\neq 0,
    \end{cases}
    \end{align}
which, by using \eqref{100}, allows us to obtain
\begin{align}
    \mu(M)=\begin{cases}
   -\frac{\nu}{2}, & m_{12}=0, \\
     -\frac{1}{2}-\frac{\nu}{2}, &  m_{12}\neq 0.
    \end{cases}
\end{align}
Moreover, by noting that $\vartheta_i=\chi_i-1/2$ for $i\in\{1,2\}$, we have
\begin{align}
    \Omega_1(m_{21},m_{22})&=\begin{cases}
       m_{21}i\frac{\mathcal{B}_{\nu}(\alpha)}{\sqrt{2\pi}}\exp\left\{\frac{i\pi\nu}{2}+\frac{i\pi}{4}\right\} , & m_{22}=0,\\
        m_{22}\frac{\mathcal{B}_{\nu}(\alpha)}{\sqrt{2\pi}}\exp\left\{\frac{i\pi\nu}{2}+\frac{i\pi}{4}\right\}, & m_{22}\neq 0, 
   \end{cases}\nonumber\\
   \Omega_1(m_{11},m_{12})&=\begin{cases}
       m_{11}i\frac{\mathcal{B}_{\nu}(\alpha)}{\sqrt{2\pi}}\exp\left\{\frac{i\pi\nu}{2}+\frac{i\pi}{4}\right\} , & m_{12}=0,\\
        m_{12}\frac{\mathcal{B}_{\nu}(\alpha)}{\sqrt{2\pi}}\exp\left\{\frac{i\pi\nu}{2}+\frac{i\pi}{4}\right\}, & m_{12}\neq 0, 
   \end{cases}
\end{align}
which can be used to obtain, from \eqref{100}, the relation
\begin{align}
\Xi(M)=\frac{\mathcal{B}_{\nu}(\alpha)}{2\sqrt{2}\pi}\exp\left\{\frac{i\pi\nu}{2}\right\}\times\begin{cases}
        -im_{12} , & m_{12}\neq 0,\\
        m_{11}+im_{22}, & m_{12}=0.
    \end{cases}
\end{align}

To complete the evaluation of $\zeta'(0;T_{\alpha,[M]})$, we need the leading term of the small-$z$ expansion of the characteristic function. By using the formulas in \eqref{125} one obtains, for $\nu\in(0,1)$,
\begin{align}
    \Psi_{\alpha}(z,R,\nu)&=R^{\nu+1/2}\cos(\alpha)-\frac{R^{1/2-\nu}}{2\nu}\sin(\alpha)+O(z),\nonumber\\
     \Psi'_{\alpha}(z,R,\nu)&=R^{\nu-1/2}\left(\frac{1}{2}+\nu\right)\cos(\alpha)-\frac{R^{-1/2-\nu}}{2\nu}\left(\frac{1}{2}-\nu\right)\sin(\alpha)+O(z),
\end{align}
while for $\nu=0$ we have
\begin{align}
    \Psi_{\alpha}(z,R,0)&=R^{1/2}\left[\cos(\alpha)+\ln R\,\sin(\alpha)\right]+O(z),\nonumber\\
     \Psi'_{\alpha}(z,R,0)&=\frac{1}{2}R^{-1/2}\left[\cos(\alpha)+\left(\ln R+2\right)\sin(\alpha)\right]+O(z). 
\end{align}
These results allow us to conclude that when  $\nu\in(0,1)$
\begin{align}\label{141}
    F_{\alpha,[M]}(0)=\sqrt{R}\,\textrm{Ai}(R)\Big[R^{\nu}\cos(\alpha)\left(m_{21}+\left(\nu+\frac{1}{2}\right)\frac{m_{22}}{R}\right)-\frac{R^{-\nu}}{2\nu}\sin(\alpha)\left(m_{21}+\left(\frac{1}{2}-\nu\right)\frac{m_{22}}{R}\right)\Big]\nonumber\\
    -\sqrt{R}\,\textrm{Ai}'(R)\Big[R^{\nu}\cos(\alpha)\left(m_{11}+\left(\nu+\frac{1}{2}\right)\frac{m_{12}}{R}\right)-\frac{R^{-\nu}}{2\nu}\sin(\alpha)\left(m_{11}+\left(\frac{1}{2}-\nu\right)\frac{m_{12}}{R}\right)\Big],
\end{align}
and for $\nu=0$,
\begin{align}\label{142}
    F_{\alpha,[M]}(0)=\sqrt{R}\,\textrm{Ai}(R)\Big[\cos(\alpha)\left(m_{21}+\frac{m_{22}}{2R}\right)+\sin(\alpha)\left(m_{21}\ln R+\frac{m_{22}}{2R}(\ln R+2)\right)\Big]\nonumber\\
    -\sqrt{R}\,\textrm{Ai}'(R)\Big[\cos(\alpha)\left(m_{11}+\frac{m_{12}}{2R}\right)+\sin(\alpha)\left(m_{11}\ln R+\frac{m_{12}}{2R}(\ln R+2)\right)\Big].
\end{align}
For the sake of simplicity, we assume that \eqref{141} and \eqref{142} are not vanishing so that $m_0=0$. If, for some particular values of the parameters that enter those expressions, they vanish, then one would have to compute the next non-vanishing term of the small-$z$ expansion of $F_{\alpha,[M]}(z)$. 

The results obtained above allow us to compute, by using \eqref{108a}, 
\begin{align}
    \zeta(0;T_{\alpha,[M]})=\begin{cases}
   \frac{\nu}{2}, & m_{12}=0, \\
     \frac{1}{2}+\frac{\nu}{2}, &  m_{12}\neq 0,
    \end{cases}
\end{align}
and, by utilizing \eqref{116},
\begin{align}
    \zeta'(0;T_{\alpha,[M]})=i\pi\nu+\ln\left(\frac{\mathcal{B}_{\nu}(\alpha)}{2\sqrt{2}\pi}\right)-\ln F_{\alpha,[M]}(0)+\begin{cases}
        \ln(m_{11}+im_{22}), & m_{12}=0, \\
        \ln m_{12}, &  m_{12}\neq 0,
    \end{cases}
\end{align}
with $F_{\alpha,[M]}(0)$ given in \eqref{141} when $\nu\in(0,1)$, and in \eqref{142} when $\nu=0$. 
\paragraph{Case $\alpha= 0$.}
When $\alpha=0$, the large-$z$ asymptotic expansion of $\Psi_{0}(z,R,\nu)$ and $\Psi'_{0}(z,R,\nu)$ read
\begin{align}\label{132a}
    \Psi_{0}(z,R,\nu)&=\frac{A_{\nu}(0)}{\sqrt{2\pi}}z^{-\frac{\nu}{2}-\frac{1}{4}}\exp\left\{-i\sqrt{z}R+\frac{i\pi\nu}{2}+\frac{i\pi}{4}\right\}\left(1+O(z^{-1/2})\right),\nonumber\\
    \Psi'_{0}(z,R,\nu)&=-i\frac{A_{\nu}(0)}{\sqrt{2\pi}}z^{-\frac{\nu}{2}+\frac{1}{4}}\exp\left\{-i\sqrt{z}R+\frac{i\pi\nu}{2}+\frac{i\pi}{4}\right\}\left(1+O(z^{-1/2})\right).
\end{align}
By comparing \eqref{132a} and \eqref{133} with \eqref{93} and \eqref{94} we find that the coefficients of the general asymptotic expansions in \eqref{93} and \eqref{94} are the same as in the case $\alpha\neq0$ with the following exceptions:
\begin{align}
    C_1=\frac{A_{\nu}(0)}{\sqrt{2\pi}}\exp\left\{\frac{i\pi\nu}{2}+\frac{i\pi}{4}\right\},\quad
    \chi_1=\frac{\nu}{2}+\frac{1}{4},\quad\textrm{and}\quad \vartheta_1=\chi_1-\frac{1}{2}=\frac{\nu}{2}-\frac{1}{4}.
\end{align}
Consequently we have
\begin{align}
    \mu(m_{21},m_{22})=\begin{cases}
    \frac{1}{4}+\frac{\nu}{2}, &  m_{22}=0, \\
     -\frac{1}{4}+\frac{\nu}{2}, &  m_{22}\neq 0,
    \end{cases}\quad \mu(m_{11},m_{12})=\begin{cases}
    \frac{1}{4}+\frac{\nu}{2}, &  m_{12}=0, \\
     -\frac{1}{4}+\frac{\nu}{2}, &  m_{12}\neq 0,
    \end{cases}
    \end{align}
from which we can write 
\begin{align}
    \mu(M)=\begin{cases}
   \frac{\nu}{2} , &  m_{12}=0, \\
     -\frac{1}{2}+\frac{\nu}{2}, &  m_{12}\neq 0.
    \end{cases}
\end{align}
For the coefficient $\Xi(M)$ we find
\begin{align}
\Xi(M)=\frac{A_{\nu}(0)}{2\sqrt{2}\pi}\exp\left\{\frac{i\pi\nu}{2}\right\}\times\begin{cases}
        -m_{12} , & m_{12}\neq 0,\\
        m_{22}-im_{11}, & m_{12}=0.
    \end{cases}
\end{align}
Equation \eqref{116} allows us to obtain
\begin{align}\label{148}
    \zeta'(0;T_{0,[M]})=-\frac{i\pi}{2}+\ln\left(\frac{A_{\nu}(0)}{2\sqrt{2}\pi}\right)-\ln F_{0,[M]}(0)+\begin{cases}
        \ln(m_{11}+im_{22}), &  m_{12}=0, \\
        \ln m_{12}, &  m_{12}\neq 0,
    \end{cases}
\end{align}
where, $F_{0,[M]}(0)$ is computed by setting $\alpha=0$ in  \eqref{141} for $\nu\in(0,1)$ and in \eqref{142} for $\nu=0$.

We would like, at this point, to address the case in which $x=0$ is a limit point. In this instance, the normalized principal solution of $\tau_1 u=zu$ on $I_1$ (namely one that behaves like $x^{\nu+1/2}$ as $x\to0^{+}$) is
\begin{align}
    \Psi(z,x,\nu)=2^{\nu}\Gamma(\nu+1)z^{-\nu/2}\sqrt{x}J_{\nu}(\sqrt{z}\,x),\quad \nu\in[1,\infty),
\end{align}
which coincides with the solution $\Psi_{\alpha}(z,x,\nu)$ in \eqref{125} when $\alpha=0$. This means that the evaluation of $\zeta'(0;T_{0,[M]})$ when $x=0$ is limit point, follows exactly the one detailed above when $\alpha=0$ and $\nu\neq0$. In particular, $\zeta'(0;T_{0,[M]})$ is given by \eqref{148} where $\nu\in[1,\infty)$ and $A_{\nu}(0)=2^{\nu}\Gamma(\nu+1)$.  

\subsubsection{Two generalized point potentials}

We consider the case of two generalized point potentials located at $R_1$ and $R_2$ within a finite interval $I=[0,L]$. For simplicity, we will assume that the differential expression in each of the three sub-intervals is $\tau=-d^{2}/dx^{2}$. For this choice, the endpoints of the interval, as well as the points $R_1$ and $R_2$ are regular and, therefore, require ordinary boundary conditions. Furthermore, we focus on the case in which the endpoints of the interval are separated. In the first interval, $I_1=[0,R_1)$, the Weyl solution at $x=0$ is 
\begin{align}\label{150}
    \Psi_{\alpha}(z,x)=\cos(\alpha)\frac{\sin\left(\sqrt{z}\,x\right)}{\sqrt{z}}-\sin(\alpha)\cos\left(\sqrt{z}\,x\right).
\end{align}
In the second interval, $I_2=(R_1,R_2)$, the fundamental system of solutions normalized at $R_1$ is
\begin{align}
    \phi_2(z,x)=\frac{\sin\left(\sqrt{z}\,(x-R_1)\right)}{\sqrt{z}},\quad \theta_2(z,x)=\cos\left(\sqrt{z}\,(x-R_1)\right).
\end{align}
Lastly, in interval $I_3=(R_2,L]$ we have the Weyl solution at $x=L$ 
\begin{align}\label{152}
 \Psi_{\beta}(z,x)=-\cos(\beta)\frac{\sin\left(\sqrt{z}(L-x)\right)}{\sqrt{z}}+\sin(\beta)\cos\left(\sqrt{z}(L-x)\right).   
\end{align}
The $SL(2,\R)$ matrices representing the point potential at $R_1$ and $R_2$ are $M_1=[\nu_{ij}]$ and $M_2=[\mu_{ij}]$, respectively. 
According to \eqref{70}, the transfer matrix ${\cal T}_2$ for the two generalized point potentials has the form
\begin{align}\label{153}
     {\cal T}_2&=M_2\Phi_{_{2}}(R_{2})M_1=M_2\begin{pmatrix}
         \cos\left(\sqrt{z}\,R\right) & z^{-1/2}\sin\left(\sqrt{z}\,R\right)\\
         -z^{1/2}\sin\left(\sqrt{z}\,R\right) & \cos\left(\sqrt{z}\,R\right)
     \end{pmatrix}M_1 \nonumber\\
&=M_2M_1\cos\left(\sqrt{z}\,R\right)+M_2\Lambda(z)M_1\frac{\sin\left(\sqrt{z}\,R\right)}{\sqrt{z}},\quad \Lambda(z)=\begin{pmatrix}
    0&1\\
    -z&0
\end{pmatrix},\quad R=R_2-R_1.
\end{align}
By introducing the vectors 
\begin{align}\label{154}
    v_{\alpha}(z)=\begin{pmatrix}
        \Psi_\alpha(z,R_1)\\
        \Psi'_\alpha(z,R_1)
    \end{pmatrix},\quad\textrm{and}\quad u_{\beta}(z)=\begin{pmatrix}
        \Psi_\beta(z,R_2)\\
        \Psi'_\beta(z,R_2)
    \end{pmatrix},
\end{align}
the characteristic function in \eqref{73} can be expressed as
\begin{align}\label{155}
    F_{\alpha,\beta,[M]}(z)=u_{\beta}^{T}(z)J{\cal T}_2v_{\alpha}(z),\quad\textrm{with}\quad J=\begin{pmatrix}
        0&1\\-1&0
    \end{pmatrix}.
\end{align}

In order to compute $\zeta'(0;T_{\alpha,\beta,[M]})$ for the two generalized point potentials, we need the large-$z$ asymptotic expansion of $\ln F_{\alpha,\beta,[M]}(z)$ for which we will assume $m_0=0$ throughout. For $\textrm{Im}(\sqrt{z})>0$ we can write
\begin{align}\label{156}
   F_{\alpha,\beta,[M]}(z)=\frac{1}{2}e^{-i\sqrt{z}R} u_{\beta}^{T}(z)J\left[M_2M_1+\frac{i}{\sqrt{z}}M_2\Lambda(z)M_1\right]v_{\alpha}(z)\left(1+O\left(e^{2i\sqrt{z}R}\right)\right),
\end{align}
which is obtained by using the large-$z$ expansion of ${\cal T}_2$ in \eqref{153}. The matrix product in the square parentheses can be expressed as
\begin{align}
   M_2M_1+\frac{i}{\sqrt{z}}M_2\Lambda(z)M_1=N+\frac{i}{\sqrt{z}}Q-i\sqrt{z}P ,
\end{align}
where the matrices $N$, $Q$, and $P$ have entries
\begin{align}
    n_{ij}=\sum_{k=1}^{2}\mu_{ik}\nu_{kj},\quad q_{ij}=\mu_{i1}\nu_{2j},\quad\textrm{and}\quad p_{ij}=\mu_{i2}\nu_{1j}.
\end{align}

Next, we need the large-$z$ asymptotic expansion of the vectors $v_\alpha(z)$ and $u_{\beta}(z)$. By using the Weyl solutions \eqref{150} and \eqref{152} in \eqref{154} we obtain, for $\alpha\neq0$ and $\beta\neq 0$,
\begin{align}\label{159}
    v_{\alpha}(z)&=-\frac{\sin(\alpha)}{2}e^{-i\sqrt{z}R_1}\begin{pmatrix}
        1\\-i\sqrt{z}
    \end{pmatrix}\left(1+O\left(z^{-1/2}\right)\right),\nonumber\\
    u_{\beta}(z)&=\frac{\sin(\beta)}{2}e^{-i\sqrt{z}(L-R_2)}\begin{pmatrix}
        1\\i\sqrt{z}
    \end{pmatrix}\left(1+O\left(z^{-1/2}\right)\right),
\end{align}
and
\begin{align}\label{160}
   v_{0}(z)&=\frac{i}{2\sqrt{z}}e^{-i\sqrt{z}R_1}\begin{pmatrix}
        1\\-i\sqrt{z}
    \end{pmatrix}\left(1+O\left(e^{2i\sqrt{z}R_1}\right)\right),\nonumber\\
    u_{0}(z)&=-\frac{i}{2\sqrt{z}}e^{-i\sqrt{z}(L-R_2)}\begin{pmatrix}
        1\\i\sqrt{z}
    \end{pmatrix}\left(1+O\left(e^{2i\sqrt{z}(L-R_2)}\right)\right) .
\end{align}
By introducing the function $G_{\Omega}(z)$ associated to a $2\times 2$ matrix $\Omega$ defined as
\begin{align}\label{161}
G_{\Omega}(z)= \begin{pmatrix}
        1 &i\sqrt{z}
    \end{pmatrix}J\Omega \begin{pmatrix}
        1\\-i\sqrt{z}
    \end{pmatrix}=\Omega_{21}-i\sqrt{z}\,\textrm{Tr}\Omega-z\Omega_{12}, 
\end{align}
we have
\begin{align}\label{162}
    F_{\alpha,\beta,[M]}(z)=\frac{1}{8}S_{\alpha,\beta}z^{-\kappa_{\alpha,\beta}/2}e^{-i\sqrt{z}L}\left[G_{N}(z)+\frac{i}{\sqrt{z}}G_{Q}(z)-i\sqrt{z}G_{P}(z)\right]\left(1+O\left(\epsilon_{\alpha,\beta}(z)\right)\right),
\end{align}
where 
\begin{align}\label{163}
    S_{\alpha,\beta}=\begin{cases}
        -\sin(\alpha)\sin(\beta),& \alpha\neq0,\,\beta\neq 0,\\
        i\sin(\beta),& \alpha=0,\,\beta\neq 0,\\
        i\sin(\alpha),& \alpha\neq0,\,\beta= 0,\\
        1,& \alpha=\beta=0,
    \end{cases}\quad \kappa_{\alpha,\beta}=
\begin{cases}
0, & \alpha\neq0,\ \beta\neq0,\\
1, & \alpha=0,\ \beta\neq0\ \text{or}\ \alpha\neq0,\ \beta=0,\\
2, & \alpha=\beta=0,
\end{cases}
\end{align}
and the error is
\begin{align}
\epsilon_{\alpha,\beta}(z)=
\begin{cases}
z^{-1/2}, & \alpha\neq0,\ \beta\neq0,\\
e^{2i\sqrt{z}\,R_1}, & \alpha=0,\ \beta\neq0,\\
e^{2i\sqrt{z}\,(L-R_2)}, & \alpha\neq0,\ \beta=0,\\
e^{2i\sqrt{z}\,\eta}, & \alpha=\beta=0,
\end{cases} \quad   \textrm{with}\quad \eta=\min\{R_1,\;R,\;L-R_2\}.
\end{align}
Before deriving the large-$z$ asymptotic expansion of $\ln F_{\alpha,\beta,[M]}(z)$, it is convenient to write the function $G_{N}(z)+iz^{-1/2}G_{Q}(z)-iz^{1/2}G_{P}(z)$ as a combination of terms proportional to powers of $\sqrt{z}$, that is
\begin{align}\label{165}
    G_{N}(z)+\frac{i}{\sqrt{z}}G_{Q}(z)-i\sqrt{z}G_{P}(z)=\frac{\rho_{-1/2}}{\sqrt{z}}+\rho_0+\sqrt{z}\rho_{1/2}+z\rho_1+z^{3/2}\rho_{3/2},
\end{align}
with
\begin{align}
    \rho_{-1/2}=iq_{21},\quad \rho_0=n_{21}+\textrm{Tr}Q,\quad \rho_{1/2}=-i(q_{12}+p_{21}+\textrm{Tr}N),\quad \rho_1=-(n_{12}+\textrm{Tr}P),\quad \rho_{3/2}=ip_{12}.
\end{align}

By utilizing the above results, we can finally write (where we have assumed $m_0=0$)
\begin{align}\label{167}
    \ln F_{\alpha,\beta,[M]}(z)&=-i\sqrt{z}L+\ln \left(\frac{S_{\alpha,\beta}}{8}\right)-\frac{\kappa_{\alpha,\beta}}{2}\ln z\nonumber\\
    &+\ln\left[\frac{\rho_{-1/2}}{\sqrt{z}}+\rho_0+\sqrt{z}\rho_{1/2}+z\rho_1+z^{3/2}\rho_{3/2}\right]+O(\epsilon_{\alpha,\beta}(z)).
\end{align}
Let $\sigma=\textrm{max}\{\iota :\rho_{\iota}\neq0\}$, then we have the expansion
\begin{align}\label{168}
  \ln\left[\frac{\rho_{-1/2}}{\sqrt{z}}+\rho_0+\sqrt{z}\rho_{1/2}+z\rho_1+z^{3/2}\rho_{3/2}\right] =\sigma\ln z+\ln\rho_\sigma+O\left(z^{-(\sigma-\gamma)}\right), 
\end{align}
with $\gamma=\textrm{max}\{\iota<\sigma : \rho_\iota\neq0\}$.
By utilizing \eqref{168} in the expression \eqref{167} we obtain
\begin{align}\label{169}
  \ln F_{\alpha,\beta,[M]}(z)&=-i\sqrt{z}L+\ln \left(\frac{S_{\alpha,\beta}}{8}\right)+\ln\rho_\sigma +\left(\sigma-\frac{\kappa_{\alpha,\beta}}{2}\right)\ln z +O\left(\textrm{max}\{|\epsilon_{\alpha,\beta}(z)|,|z^{-(\sigma-\gamma)}|\}\right).
\end{align}

To complete the computation of $\zeta'(0;T_{\alpha,\beta,[M]})$ we need $F_{\alpha,\beta,[M]}(0)$ assuming $m_0=0$. By using \eqref{155} and noting that $M_2\Lambda(0)M_1=Q$ we get
\begin{align}\label{170}
    F_{\alpha,\beta,[M]}(0)=\lim_{z\to 0}u_{\beta}^{T}(z)J{\cal T}_2v_{\alpha}(z)=u_{\beta}^{T}(0)J\left(N+RQ\right)v_{\alpha}(0).
\end{align}
Recalling the expressions \eqref{150} and \eqref{152} we have, as $z\to 0$,
\begin{align}
 u_{\beta}(0)=\begin{pmatrix}
     -(L-R_2)\cos(\beta)+\sin(\beta)\\
     \cos(\beta)
 \end{pmatrix} ,\quad \textrm{and}\quad   v_{\alpha}(0)=\begin{pmatrix}
     R_1\cos(\alpha)-\sin(\alpha)\\
     \cos(\alpha)
 \end{pmatrix},
\end{align}
and, hence,
\begin{align}\label{172}
   F_{\alpha,\beta,[M]}(0)&=-\cos(\alpha)\cos(\beta)
\Bigl[
R_1(L-R_2)(n_{21}+Rq_{21})
+(L-R_2)(n_{22}+Rq_{22})
+R_1(n_{11}+Rq_{11})
\nonumber\\
&+n_{12}+Rq_{12}
\Bigr]
+\sin(\alpha)\cos(\beta)
\Bigl[
(L-R_2)(n_{21}+Rq_{21})
+n_{11}+Rq_{11}
\Bigr]
\nonumber\\
&+\cos(\alpha)\sin(\beta)
\Bigl[
R_1(n_{21}+Rq_{21})
+n_{22}+Rq_{22}
\Bigr]-\sin(\alpha)\sin(\beta)
(n_{21}+Rq_{21}),
\end{align}
where we once again assume this expression is nonvanishing.

The derivative of the spectral $\zeta$-function at $s=0$ is then
\begin{align}\label{173}
    \zeta'(0;T_{\alpha,\beta,[M]})=-i\pi\left(\sigma-\frac{\kappa_{\alpha,\beta}}{2}\right)+\ln \left(\frac{S_{\alpha,\beta}}{8}\right)+\ln\rho_\sigma-\ln F_{\alpha,\beta,[M]}(0). 
\end{align}

\begin{remark}
   It is important to make a comment about the limit of coalescing generalized point potentials. From equation \eqref{169} it is not difficult to see that the terms of the large-$z$ asymptotic expansion do not depend on the position of the generalized point potentials. This dependence can be found in the neglected part of the expansion. In the limit $R=R_2-R_1\to0$ the error term in the expansion \eqref{156} is no longer negligible, and, in fact, becomes of the same order as the leading term. This means that the subsequent large-$z$ expansion \eqref{169} is not valid in the limit $R\to0$. To obtain the correct large-$z$ asymptotic expansion of the characteristic function one needs to perform the $R\to 0$ limit first, and then expand for large $z$. 

   For $R\to0$, namely $R_1,R_2\to r$, we have, from \eqref{153}, that ${\cal T}_{2}\to M_2M_1$, as already mentioned at the end of Section \ref{Sec3.1}. This implies that the characteristic function becomes 
   \begin{align}
    F^{(R=0)}_{\alpha,\beta,[M]}(z)=u_{\beta}^{T}(z)JM_2M_1v_{\alpha}(z).
\end{align}
   By utilizing the expansions \eqref{159} and \eqref{160} we obtain
   \begin{align}
    F^{(R=0)}_{\alpha,\beta,[M]}(z)=\frac{1}{4}S_{\alpha,\beta}z^{-\kappa_{\alpha,\beta}/2}e^{-i\sqrt{z}L}G_{M_2M_1}(z)\left(1+O\left(\bar{\epsilon}_{\alpha,\beta}(z)\right)\right),  
   \end{align}
   where 
   \begin{align}
\bar{\epsilon}_{\alpha,\beta}(z)=
\begin{cases}
z^{-1/2}, & \alpha\neq0,\ \beta\neq0,\\
e^{2i\sqrt{z}\,r}, & \alpha=0,\ \beta\neq0,\\
e^{2i\sqrt{z}\,(L-r)}, & \alpha\neq0,\ \beta=0,\\
e^{2i\sqrt{z}\,\bar{\eta}}, & \alpha=\beta=0,
\end{cases} \quad   \textrm{with}\quad \bar{\eta}=\min\{r,\;L-r\}.
\end{align}
   These results lead to the expression 
\begin{align}
  \ln F^{(R=0)}_{\alpha,\beta,[M]}(z)&=-i\sqrt{z}L+\ln \left(\frac{S_{\alpha,\beta}}{4}\right)+\ln\rho_\sigma +\left(\sigma-\frac{\kappa_{\alpha,\beta}}{2}\right)\ln z +O\left(\textrm{max}\{|\bar{\epsilon}_{\alpha,\beta}(z)|,|z^{-(\sigma-\gamma)}|\}\right),
\end{align}
with $\rho_\sigma$ and $\sigma$ as defined in \eqref{168} and $\rho_0=(M_2M_1)_{21}$, $\rho_{1/2}=-i\textrm{Tr}(M_2M_1)$, and $\rho_{1}=-(M_2M_1)_{12}$.

The derivative of the spectral $\zeta$-function at $s=0$ is then
\begin{align}
   \zeta_{(R=0)}'(0;T_{\alpha,\beta,[M]})=-i\pi\left(\sigma-\frac{\kappa_{\alpha,\beta}}{2}\right)+\ln \left(\frac{S_{\alpha,\beta}}{4}\right)+\ln\rho_\sigma-\ln F^{(R=0)}_{\alpha,\beta,[M]}(0),  
\end{align}
where $F^{(R=0)}_{\alpha,\beta,[M]}(0)$ is obtained from \eqref{172} by simply taking the limits $R\to 0$ and $R_1,R_2\to r$ under the assumption that $m_0=0$.
\end{remark}

\medskip

As an explicit example, we consider the case in which both generalized point potentials are described by matrices belonging to the lower triangular subgroup of $SL(2,\R)$. This subgroup is particularly interesting since the common point potentials, namely the $\delta$, $\delta'$, and $\delta$-$\delta'$ belong to this subgroup under the interpretation of \cite[Sec. 4]{ADK98} for $\delta'$. To keep the analysis simple, we assume Dirichlet boundary conditions at the endpoints of the interval $I$, that is, $\alpha=\beta=0$. 
The matrices $M_i$, $i\in\{1,2\}$, can be parametrized as 
\begin{align}\label{174}
    M_i=\begin{pmatrix}
        a_i& 0\\
        c_i& a_{i}^{-1}
    \end{pmatrix},\quad a_i\in\R/\{0\},\quad c_i\in\R. 
\end{align}
With the parametrization \eqref{174}, the matrices $N$, $Q$, and $P$ become
\begin{align}
    N=\begin{pmatrix}
        a_1a_2& 0\\
        c_2a_1+c_1a_{2}^{-1}&(a_1a_2)^{-1}
    \end{pmatrix},\quad Q=\begin{pmatrix}
        a_2c_1& a_2a_1^{-1}\\
        c_1c_2& c_2a_{1}^{-1}
    \end{pmatrix},\quad P=\begin{pmatrix}
        0&0\\
        a_1a_2^{-1}& 0
    \end{pmatrix}.
\end{align}
By using these matrices in equation \eqref{165} we find that 
\begin{align}
    \sigma=\frac{1}{2}\quad\textrm{and}\quad \rho_{1/2}=-i(a_1+a_{1}^{-1})(a_2+a_{2}^{-1}).
\end{align}
Assuming $m_0=0$ once again, the value in \eqref{172} reduces, in this particular example, to
\begin{align}\label{177}
   F_{0,0,[M]}(0)&=-
R_1(L-R_2)(c_2a_1+c_1a_2^{-1}+Rc_1c_2)
-a_{1}^{-1}(L-R_2)(a_2^{-1}+Rc_2)\nonumber\\
&-a_2R_1(a_1+Rc_1)-R\,a_2a_1^{-1}. 
\end{align}

Thus we conclude that the case of generalized point potentials described by lower triangular matrices with Dirichlet boundary at the endpoints yields (where the last quantity can be evaluated from \eqref{177})
\begin{align}\label{178}
    \zeta'(0;T_{0,0,[M]})=\ln\left[\frac{(a_1+a_{1}^{-1})(a_2+a_{2}^{-1})}{8}\right]-\ln F_{0,0,[M]}(0).
\end{align} 

\section{Concluding remarks} 

In this work, we have developed a framework for the  analysis of the spectral $\zeta$-function associated with Sturm--Liouville operators in the presence of general multiple point potentials. The generalized point interactions are characterized by self-adjoint extensions of the Sturm--Liouville operator defined on a chain of $N+1$ adjacent intervals, with each interaction described by
matching conditions at the common boundary of two neighboring intervals. This formulation allows for generalized point potentials described by matrices in $SL(2,\mathbb{R})$ and provides a unified treatment of systems containing an arbitrary number of point interactions. 

In order to construct the spectral $\zeta$-function of the general $N$ point potential system, we have found the characteristic function, whose zeros determine the spectrum of the
corresponding self-adjoint realization, for the three possible configurations of the endpoints of the interval. The characteristic function includes the influence of the generalized point potentials through its dependence of the transfer matrix. 

We have used a contour integral involving the characteristic function to represent the spectral $\zeta$-function in a region of the complex plane to the right of its abscissa of convergence (see Theorem \ref{Thrm3.4}). The analytic continuation of the spectral $\zeta$-function to a larger region of the complex plane can be achieved by utilizing the large-$z$ asymptotic expansion of the characteristic function. Although the characteristic function can be constructed for an arbitrary number $N$ of point potentials, its general asymptotic analysis becomes increasingly involved as $N$ increases because of the structure of the associated transfer  matrix. For the sake of simplicity, we therefore carried out explicitly the analytic continuation for two representative systems to a neighborhood of $s=0$: An interface problem involving a single point potential separating two different Sturm--Liouville operators, and two generalized point potentials with the same Sturm--Liouville operator on the three resulting intervals.  
For both of these systems we have explicitly computed the value $\zeta'(0;T)$ for self-adjoint realizations $T$ of these problems, which can be used to evaluate their functional determinant.

\end{document}